\definecolor{zx_grey}{RGB}{211,211,211}
\tikzstyle{Green Node}=[minimum size=5mm, font={\footnotesize\boldmath}, shape=rectangle, rounded corners=2mm, inner sep=0.2mm, outer sep=-2mm, scale=0.8, tikzit shape=circle, draw=black, fill={rgb,255: red,216; green,248; blue,216}, tikzit fill={rgb,255: red,216; green,248; blue,216}]
\tikzstyle{Empty green}=[inner sep=0mm, minimum size=2mm, shape=circle, draw=black, fill={rgb,255: red,216; green,248; blue,216}]
\tikzstyle{Red Node}=[minimum size=5mm, font={\footnotesize\boldmath}, shape=rectangle, rounded corners=2mm, inner sep=0.2mm, outer sep=-2mm, scale=0.8, tikzit shape=circle, draw=black, fill={rgb,255: red,232; green,165; blue,165}, tikzit fill={rgb,255: red,232; green,165; blue,165}]
\tikzstyle{Empty red}=[inner sep=0mm, minimum size=2mm, shape=circle, draw=black, fill={rgb,255: red,232; green,165; blue,165}]
\tikzstyle{wh}=[minimum size=5mm, font={\footnotesize\boldmath}, shape=rectangle, rounded corners=2mm, inner sep=0.2mm, outer sep=-2mm, scale=0.8, tikzit shape=circle, draw=black, fill={rgb,255: red,255; green,255; blue,255}, tikzit fill={rgb,255: red,255; green,255; blue,255}]
\tikzstyle{h}=[fill=white, draw=black, shape=rectangle, inner sep=0.6mm, minimum height=1.5mm, minimum width=1.5mm]
\tikzstyle{AutoCCZ}=[minimum size=6mm, font={\footnotesize}, fill=white, draw=black, shape=rectangle, scale=0.8]
\tikzstyle{txt}=[minimum size=5mm, font={\footnotesize\boldmath}, shape=rectangle, rounded corners=2mm, inner sep=0.2mm, outer sep=-2mm, scale=0.8, tikzit shape=circle, draw=none, fill=none, tikzit draw=none]
\tikzstyle{bigtxt}=[minimum size=5mm, font={\boldmath}, shape=rectangle, rounded corners=2mm, inner sep=0.2mm, outer sep=-2mm, scale=0.8, tikzit shape=circle, draw=none, fill=none, tikzit draw=none]
\tikzstyle{testt}=[fill=white, draw=black, shape=circle]
\tikzstyle{txtnobold}=[minimum size=5mm, font={\footnotesize}, shape=rectangle, rounded corners=2mm, inner sep=0.2mm, outer sep=-2mm, scale=0.8, tikzit shape=circle, draw=none, fill=none, tikzit draw=none]
\tikzstyle{divide}=[regular polygon, regular polygon sides=3, draw=black, fill={rgb,255: red,211; green,211; blue,211}, inner sep=2.3pt, tikzit category=scal, rounded corners=0.8mm]
\tikzstyle{black}=[fill=black, draw=black, shape=circle, tikzit fill=black, tikzit draw=black, tikzit shape=circle, tikzit category=IH, inner sep=2pt]
\tikzstyle{gather}=[fill={rgb,255: red,211; green,211; blue,211}, draw=black, tikzit category=scal, rounded corners=0.8mm, regular polygon, regular polygon sides=3, inner sep=2.3pt, rotate=180]
\tikzstyle{new style 0}=[fill=white, draw=black, shape=circle]
\tikzstyle{morphism}=[fill=white, draw=black, shape=rectangle, minimum height=5mm, font={\footnotesize}]
\tikzstyle{gnd}=[gnd]
\tikzstyle{borders}=[fill=none, draw=black, shape=circle, inner sep=-2.02mm, outer sep=0mm]
\tikzstyle{Gate control}=[fill=black, draw=black, shape=circle, tikzit fill=black, tikzit draw=black, scale=0.25]
\tikzstyle{Quantum gate}=[fill=none, draw=black, shape=rectangle, tikzit shape=rectangle, font={\footnotesize}, minimum height=5mm, scale=0.7]
\tikzstyle{smallest txt}=[fill=none, draw=none, shape=circle, font={\footnotesize}, scale=0.6]
\tikzstyle{hadamard edge}=[-, dashed, dash pattern=on 2pt off 1pt, thin, draw={rgb,255: red,68; green,136; blue,255}, fill=none]
\tikzstyle{edge}=[-, thick]
\tikzstyle{delay}=[-, fill={rgb,255: red,232; green,165; blue,165}, tikzit fill={rgb,255: red,232; green,165; blue,165}]
\tikzstyle{delay2}=[-, draw=black, fill={rgb,255: red,216; green,248; blue,216}, tikzit draw=black, tikzit fill={rgb,255: red,216; green,248; blue,216}]
\tikzstyle{arrow}=[->]
\tikzstyle{distribution}=[-, fill={rgb,255: red,116; green,179; blue,231}, tikzit fill={rgb,255: red,37; green,77; blue,255}]
\tikzstyle{discontinuous}=[-, dashed, dash pattern=on 4pt off 1pt, thin, draw=black, fill=black, tikzit fill=black, tikzit draw=black]
\tikzstyle{discon}=[-, dashed, dash pattern=on 2pt off 1pt, thin]
\tikzstyle{gzz fill}=[-, fill={rgb,255: red,216; green,248; blue,216}, dashed]
\newtheorem{definition}{Definition}
\newtheorem{proposition}{Proposition}
\theoremstyle{definition}
\algnewcommand{\LineComment}[1]{\Statex \hskip\ALG@thistlm \(\triangleright\) #1}
\newcommand*{\relrelbarsep}{.386ex}
\newcommand*{\relrelbar}{%
  \mathrel{%
    \mathpalette\@relrelbar\relrelbarsep
  }%
}
\newcommand*{\@relrelbar}[2]{%
  \raise#2\hbox to 0pt{$\m@th#1\relbar$\hss}%
  \lower#2\hbox{$\m@th#1\relbar$}%
}
\providecommand*{\rightrightarrowsfill@}{%
  \arrowfill@\relrelbar\relrelbar\rightrightarrows
}
\providecommand*{\leftleftarrowsfill@}{%
  \arrowfill@\leftleftarrows\relrelbar\relrelbar
}
\providecommand*{\xrightrightarrows}[2][]{%
  \ext@arrow 0359\rightrightarrowsfill@{#1}{#2}%
}
\providecommand*{\xleftleftarrows}[2][]{%
  \ext@arrow 3095\leftleftarrowsfill@{#1}{#2}%
}
\newsavebox{\@brx}
\newcommand{\llangle}[1][]{\savebox{\@brx}{\(\m@th{#1\langle}\)}%
  \mathopen{\copy\@brx\kern-0.5\wd\@brx\usebox{\@brx}}}
\newcommand{\rrangle}[1][]{\savebox{\@brx}{\(\m@th{#1\rangle}\)}%
  \mathclose{\copy\@brx\kern-0.5\wd\@brx\usebox{\@brx}}}
\def\@seccntformat#1{\@ifundefined{#1@cntformat}%
   {\csname the#1\endcsname\quad}  
   {\csname #1@cntformat\endcsname}
}
\newcommand{\odd}[1]{\textsf{Odd}\left(#1\right)}
\newcommand{\cz}{\operatorname{CZ}}
\newcommand{\gms}{\operatorname{GMS}}
\newcommand{\gzz}{\operatorname{GZZ}}
\newcommand{\fo}{\operatorname{FO}}
\newcommand{\cnot}{\operatorname{CNOT}}
\newcommand{\h}{\operatorname{H}}
\newcommand{\zz}{\operatorname{ZZ}}
\newcommand{\xx}{\operatorname{XX}}
\newcommand{\z}{\operatorname{Z}}
\newcommand{\x}{\operatorname{X}}
\newcommand{\y}{\operatorname{Y}}
\newcommand{\rz}{\operatorname{R}_{\z}}
\newcommand{\rx}{\operatorname{R}_{\x}}
\newcommand{\ry}{\operatorname{R}_{\y}}
\newcommand{\lp}{LP\xspace}
\newcommand{\rr}{\operatorname{R}}
\newcommand{\comp}[1]{\overline{#1}}
\title{Optimization and Synthesis of Quantum Circuits\\with Global Gates}
\author{
  Alejandro Villoria\orcidlink{0000-0003-0787-2568}
  \qquad Henning Basold\orcidlink{0000-0001-7610-8331}
  \qquad Alfons Laarman\orcidlink{0000-0002-2433-4174}
  \institute{Leiden Institute of Advanced Computer Science, The Netherlands}
  \email{ \{a.d.villoria.gonzalez, h.basold, a.w.laarman\}@liacs.leidenuniv.nl } }
\begin{document}
\maketitle
\begin{abstract}
  Compiling quantum circuits to account for hardware restrictions
  is an essential part of the quantum computing stack.
  Circuit compilation allows us to adapt algorithm descriptions into a sequence of operations
  supported by real quantum hardware, and has the potential to significantly
  improve their performance when optimization techniques are added to the process.
  One such optimization technique is reducing the number of quantum gates that are
  needed to execute a circuit.
  For instance, methods for reducing the number of non-Clifford gates or CNOT gates from
  a circuit are an extensive research area that has gathered significant interest over the years.
  For certain hardware platforms such as ion trap quantum computers, we can leverage some of their
  special properties to further reduce the cost of executing a quantum circuit in them.
  In this work we use global interactions, such as the Global M{\o}lmer-S{\o}rensen
  gate present in ion trap hardware, to optimize and synthesize quantum circuits.
  We design and implement an algorithm that is able to compile an arbitrary quantum circuit into
  another circuit that uses global gates as the entangling operation, while optimizing the number
  of global interactions needed.
  The algorithm is based on the ZX-calculus and uses a specialized circuit extraction
  routine that groups entangling gates into Global M{\o}lmer-S{\o}rensen gates.
  We benchmark the algorithm in a variety of circuits, and show how it improves their performance
  under state-of-the-art hardware considerations in comparison to a naive algorithm and
  the Qiskit optimizer.
\end{abstract}
\section{Introduction}

Physical realizations of quantum computers in the current era vary in multiple,
non-trivial aspects.
A fundamental difference between two hardware realizations of a quantum computer can be
the choice of implementation of their qubits.
Current candidates for qubit implementations are superconducting qubits~\cite{kim-evidence-2023},
trapped ions~\cite{figgatt-parallel-2019}, nitrogen-vacancy
centers~\cite{gulka-room-temperature-2021}, and neutral atoms~\cite{henriet-quantum-2020},
to name a few.
Each choice of qubit implementation comes with distinct features that need to be considered
when compiling and optimizing circuits for a quantum computer based on them.
For instance, the connectivity between qubits in a superconducting quantum computer
is limited to a specific topology, while
on a trapped ion quantum computer the connectivity is all-to-all.
This all-to-all connectivity naturally gives us a special set of quantum gates to work
with, in which we have access to powerful multi-qubit entangling gates that we refer to as
\emph{global gates}~\cite{figgatt-parallel-2019}.
Thus, if we wish to run an arbitrary quantum circuit on a trapped ion quantum computer,
we need a way to synthesize the circuit such that we take advantage of the global gates
we have available.

Circuit compilation and optimization are parts of the quantum computing stack that are
instrumental to enable useful computations in real quantum devices, which is especially relevant
for Noisy Intermediate-Scale Quantum (NISQ) era devices due to their limited resources
~\cite{Corcoles-2020}.
In particular, the number of entangling operations, rather than single-qubit ones,
is usually the target to be reduced in works concerned with current-era devices.
In the case of ion trap quantum computers, we have that both global interactions and
two-qubit gates such as the M{\o}lmer-S{\o}rensen $\xx$ gate are considerably more expensive
in terms of execution time and fidelity than single-qubit operations
\cite{bermudez-assessing-2017}.
Given that global gates such as the GMS gate are in fact a series of two-qubit $\xx$ gates
executed simultaneously, we can see how grouping $\xx$ gates into a low number of GMS
gates can be beneficial for the overall performance of the circuit.

The process of optimizing and synthesizing quantum circuits greatly varies depending on the tools
used for the task.
One such tool that has gained traction in the past years is the ZX-calculus
\cite{coecke-interacting-2011}.
The ZX-calculus is a graphical language that can be used to represent
quantum circuits as diagrams and to perform rewrites on them.
Rewriting circuits with the ZX-calculus can lead to algorithms
that reduce the gate counts of the circuits they represent~\cite{duncan-graph-theoretic-2020}.
Such rewriting algorithms consist of first reducing the size of the diagram and then
\emph{extracting} a (optimized) quantum circuit from it.
Being the final step in the process, we can think of circuit extraction as a way of synthesizing
our circuit.
Indeed, we can adapt the extraction algorithm to
ensure that we output a quantum circuit that accommodates the specific requirements of the hardware
platform we have in mind.
In this work we present a circuit extraction algorithm that extracts entangling gates as
global gates while keeping the total amount of global gates low, which in turn yields a quantum
circuit compiled for ion trap hardware.
By leveraging the ZX-calculus, we allow for the compilation of arbitrary quantum circuits as
opposed to most works on compilation with global gates,
where only subclasses of quantum circuits are studied.

\paragraph{Related work}\mbox{}\\
There have been multiple works on the topic of circuit compilation for ion trap
architectures, with a focus on reducing the amount of GMS gates and studying how they affect the
overall circuit performance.
Some works have focused on compiling
Clifford circuits using global gates,
with results that started with implementations using a number of global gates that
went from scaling linearly
with the number of qubits~\cite{maslov-use-2018, wetering-constructing-2021,
  grzesiak-efficient-2022, basler-synthesis-2023},
to a constant amount~\cite{bravyi-constant-cost-2022}.
There has also been work on compiling particular (not necessarily Clifford) circuits
of interest with global gates,
such as Toffoli-$n$, QFT, QRAM, Quantum Volume circuits, and Diagonal unitaries
\cite{maslov-use-2018, basler-synthesis-2023, nemirovsky-efficient-2025,
  allcock-constant-depth-2024}.
One work has tackled the task of compiling arbitrary circuits with global gates, where an upper
bound on the maximum number of GMS gates needed to compile the circuit is given,
which depends on the qubit count and number of non-Clifford gates, though no benchmarks for this
method are known~\cite{wetering-constructing-2021}.
On a similar line, other works have studied the performance of GMS gates given different
constraints in the behaviour of those gates and on a variety of other hardware characteristics,
such as different gate fidelities and limited connectivity.
It was shown that GMS gates of limited capability (only targeting four qubits at a time)
need to have a fidelity of $98\%$-$99\%$ in order to surpass implementations with two-qubit gates
on systems of size up to $20$ qubits~\cite{kumar-digital-analog-2025}.
The performance of five-qubit GMS
gates against regular two-qubit gates for error correction applications has also been
studied~\cite{bermudez-assessing-2017}, and it was shown that
having GMS gates heavily reduces the number of other necessary physical operations such as ion
shuttling and swapping.
Lastly, it was demonstrated that in some specific architectures (such as star-shaped),
GMS gates that affect the whole qubit register perform better than regular two-qubit gates
\cite{canelles2023benchmarking}.

\paragraph{Contributions}\mbox{}\\
We propose a method for converting an arbitrary quantum circuit into one using the gate-set
native to ion trap quantum platforms.
Our synthesis algorithm is based on the ZX-calculus and consists of a circuit extraction procedure
that outputs a circuit using GMS gates and single-qubit rotations.
Our algorithm can be split into two parts.
On one side, during circuit extraction, we constrain how the layers of CNOT gates of the
output circuit are shaped so they are more amenable to be compiled with a single GMS.
Secondly, we add a variety of peephole optimization style rewrites during the extraction process
to group entangling gates into GMS gates and to reduce the amount of single-qubit gates.
In both techniques, we aim to minimize the number of global gates we generate,
which are powerful but more costly than single-qubit operations.
We develop a Python implementation of our algorithm and run benchmarks over different
classes of quantum circuits.

\section{Ion Trap Quantum Hardware}\label{sec:ions}
Quantum computing with trapped ions is an approach to quantum information processing in
which a linear chain of atomic ions, suspended by an electric field, are operated on via
focused laser beams to perform quantum logic gates~\cite{Chen2024benchmarkingtrapped}. 
The set of operations arising from this type of
hardware consists of single-qubit rotations
and an entangling two-qubit $\xx$ gate, sometimes referred to as the M{\o}lmer-S{\o}rensen
(MS) or Ising gate.
These gates are defined as follows:
\begin{equation}
  \label{eq:nativegates}
  \rr(\theta,\phi) = e^{-i(\theta/2)(\cos\phi\x + \sin\phi\y)},
  \quad\quad \rz(\theta) = e^{-i(\theta/2)\z},
  \quad\quad \xx(\alpha)_{i,j} = e^{-i\frac{\alpha}{2}\x_i\x_j}.
\end{equation}
Where $\rr(\theta,\phi)$ is a rotation by an angle of $\theta$ with phase $\phi$,
$\x,\y$, and $\z$ are the Pauli gates, $\alpha$ is an angle in $ [0,2\pi)$,
and $\x_i\x_j$ are the Pauli $\x$ on qubits $i$ and $j$, respectively.
We can recover the other two better-known Pauli rotation gates by noticing
$\rx(\theta)=\rr(\theta,0)$ and $\ry(\theta)=\rr(\theta,\pi/2)$.
Given this straightforward conversion, in this work we mostly reason with the Pauli rotation gates.

On top of this (universal) set of operations, ion trap hardware is naturally capable
of global entangling $\xx$ interactions
(referred to as Global MS gates, or GMS) due to its all-to-all connectivity.
GMS gates consist of a sequence of two-qubit $\xx$ gates performed in parallel
between a subset of the available qubits, meaning that it can range from being a two-qubit
gate to affecting the whole qubit register.
We define the GMS gate for an angle $\alpha$ and a symmetric binary matrix $A$ with
zeroes in the diagonal where $A_{i,j} = 1$ if there is an $\xx$ interaction between
qubits $i$ and $j$:
\begin{equation}
  \label{eq:gmsdef}
  \gms_A(\alpha) = \prod_{i> j\in A}\xx(\alpha A_{i,j})_{i,j}.
\end{equation}
Intuitively, this gate consists of a composition of commuting two-qubit phase gates.
We can also define similar two-qubit and global interactions on the $\z$ axis by placing
Hadamard gates before and after applying an $\xx$ or a GMS on the involved qubits.
With this we get the definitions
$\zz(\alpha)_{i,j} := \h_i\h_j\ \xx(\alpha)_{i,j}\ \h_i\h_j$ and
$\gzz_A(\alpha) := \h^{\otimes n}\ \gms_A(\alpha)\  \h^{\otimes n} =  \prod_{i> j\in A}\zz(\alpha A_{i,j})_{i,j}$
where we write $\h^{\otimes n}$ for the $n$-th fold tensor product of the Hadamard gate.

Due to the various ways it can be implemented in real hardware, we can find in the literature
variations in the definition of the GMS gate~\cite{maslov-use-2018}.
Other, more restrictive definitions of the GMS have to involve all of the qubits in the register,
disallowing the targeting of a specific subset of the qubits.
In this case we can always manually exclude qubits by applying an additional sequence of 
these restricted GMS gates and some single-qubit gates, the amount of which
scales linearly with the qubit count~\cite{wetering-constructing-2021}.
Similarly, the efficient, arbitrary, simultaneously entangling (EASE) gate allows for
different couplings $\alpha_{i,j}$
for each pair of qubits $i,j$
at the cost of added
complexity on the pulse design
\cite{grzesiak-efficient-2022, grzesiak-efficient-2020}.

In this work we assume that the ion trap device is natively capable of executing $(1)$ the
$\rr(\theta,\phi)$ and $\rz(\theta)$ gates on any qubit $(2)$ any single-qubit gate
in parallel $(3)$ $\xx(\alpha)$ on any qubit pair and $(4)$ the GMS of~\Cref{eq:gmsdef} with the
ability of excluding certain qubits
(but we do not require different couplings for each pair of qubits).

\section{ZX-calculus}\label{sec:zx}

We introduce in this section the ZX-calculus, a formal diagrammatic language
that is used for representing quantum
computations~\cite{coecke-interacting-2011}.
We present its basic definitions and properties, how quantum circuits can be
represented as ZX-diagrams, and how the process of diagram simplification and circuit
extraction from a ZX-diagram works.

\subsection{Definition}
The ZX-calculus consists of a set of graphical generators (referred to as \textit{ZX-diagrams})
that are interpreted as linear maps
$L: \mathbb{C}^{2^{n}}\to \mathbb{C}^{2^{m}}$
in Hilbert spaces in the following way, where $\alpha\in\mathbb{R}$
and $|\psi^{(n)}\rangle$ is the shorthand for the $n$-fold tensor product of $|\psi\rangle$:
\begin{equation*}
  \scalebox{0.99}{\tikzfig{interpretationEq}}
\end{equation*}
From left to right, the generators of the first row are referred to as
the $Z$- and $X$-\textit{spiders} (or green and red spiders, respectively) and the
Hadamard gate.
On the second row we have the swap, identity wire, \textit{cap}, \textit{cup}, and empty diagram.
Similar to quantum circuit notation, we will be reading diagrams from left (inputs) to
right (outputs), where sequential
composition is shown as connecting output and input wires, and parallel composition as
vertical arrangement of diagrams.
Additionally, we oftentimes represent the zero-phase spiders as empty and the Hadamard gate as
a \emph{Hadamard edge}:
\begin{equation*}
  \label{eq:syntacticSugar}
  \scalebox{1.2}{\tikzfig{syntacticSugar}}.
\end{equation*}
Some relevant quantum gates have simple representations as ZX-diagrams:
\begin{equation}
  \label{eq:gateExamples}
  \scalebox{1.2}{\tikzfig{gateExamples}}
\end{equation}
Apart from the graphical generators, the ZX-calculus comes equipped with a
series of \textit{rewrite rules} that allows
us to modify ZX-diagrams, while preserving their underlying linear map
\cite{coecke-interacting-2011, wetering-zx-calculus-2020}.
We show some of the rules in~\Cref{fig:ruleset}.
%

\begin{figure}
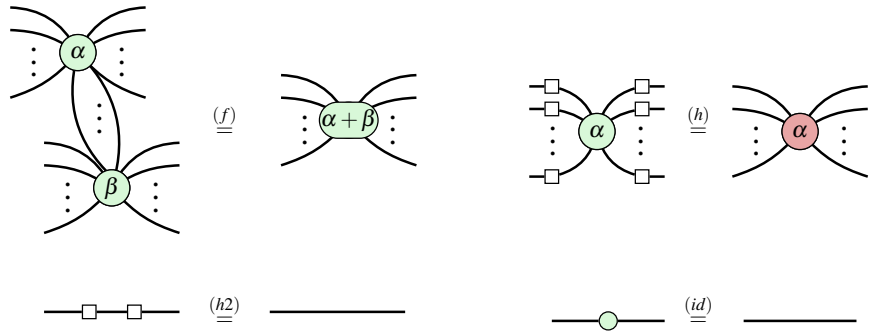

  \centering
  \scalebox{1.2}{%
    \tikzfig{ruleset-no-borders}%
    }
    \caption{Four of the rules of the ZX-calculus.
    A complete ruleset can be found in~\cite{vilmart-near-optimal-2018}.}
  \label{fig:ruleset}
\end{figure}

Two important properties of the ZX-calculus are \textit{universality} and \textit{completeness}.
Universality ensures that we can
represent any linear map $L: \mathbb{C}^{2^{n}}\to \mathbb{C}^{2^{m}}$ as a ZX-diagram,
while completeness means that we can always
find a series of transformations asserting that two diagrams $D_1 = D_2$ whenever their underlying
linear maps are also equal.

The ZX-calculus has been used for a variety of purposes, such as reasoning about quantum error
correction~\cite{kissinger-phase-free-2022}, and
quantum circuit optimization~\cite{kissinger-reducing-2020, backens-there-2021},
synthesis~\cite{cowtan-phase-2020, staudacher-multi-controlled-2024},
and simulation~\cite{kissinger-classical-2022}.
For those applications, usually one starts with a quantum circuit that
needs to be converted into a ZX-diagram.
Transforming a circuit into an equivalent diagram is a straightforward process, done by
going through each quantum gate and substituting it with its equivalent ZX-diagram,
as in e.g. (\ref{eq:gateExamples}).

\subsection{Diagram Reduction}\label{sec:reduction}
Once we have a ZX-diagram representation of a circuit, we can perform different rewriting
strategies on it.
Here we are interested in following a similar process as the one introduced in
\cite{duncan-graph-theoretic-2020,backens-there-2021}.
The process consists of $(1)$ transforming a ZX-diagram into a \textit{graph-like} ZX-diagram
(see~\Cref{def:graph-like} in~\Cref{app:graph})
$(2)$ carrying out a series of rewrite rules that reduce the diagram by decreasing the number of
spiders, and $(3)$ finalizing by extracting a quantum circuit from the ZX-diagram.
\begin{figure}
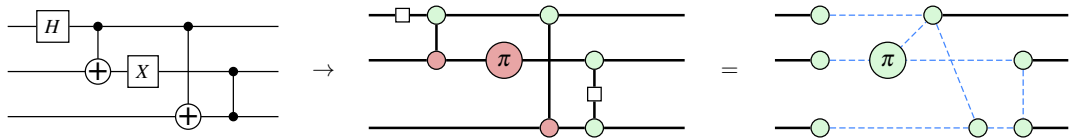

  \centering
  \scalebox{1.2}{%
    \tikzfig{circToGraph}%
    }
    \caption{From quantum circuit to graph-like ZX-diagram.
      We turned the ZX-diagram into graph-like form by applying the
      rules of~\Cref{fig:ruleset} and interpreting Hadamard gates as Hadamard edges.}
  \label{fig:circToGraph}
\end{figure}

By taking ZX-diagrams into graph-like form
(which is always possible, see~\Cref{fig:circToGraph} for an example),
we can think about them as measurement patterns~\cite{backens-there-2021}
and \textit{labelled open graphs} (see \Cref{def:labelled} in~\Cref{app:graph}).
This allows us to leverage a
graph-theoretic property called generalised flow (or \textit{gflow}, see
\Cref{def:gflow} in~\Cref{app:graph})
to ensure we can extract
a circuit from a ZX-diagram efficiently and deterministically, even after simplifying
said diagram~\cite{duncan-graph-theoretic-2020}.
Starting with a quantum circuit and transforming it into a graph-like ZX-diagram ensures the
existence of gflow in our diagram, as quantum circuits themselves also have a (stronger)
notion of flow in them~\cite{duncan-graph-theoretic-2020}.

Previous works have come up with rewrite rules such as
\textit{local complementation} and \textit{pivoting}
\cite{duncan-graph-theoretic-2020},
that preserve both the graph-like structure of a diagram and the existence of gflow.
By applying these rules on a diagram repeatedly until they can no longer be applied
\footnote{Given that each rule always removes one or more vertices from the diagram if the
vertex/vertices satisfy certain conditions, we can ensure that such a diagram simplification
algorithm always terminates.},
we will arrive to a reduced diagram that still has gflow.
Once the ZX-diagram has been simplified, one performs circuit extraction on the diagram
to get a quantum circuit from it.

\subsection{Circuit Extraction}\label{sec:extraction}

Circuit extraction involves constructing a quantum circuit from a ZX-diagram by
iteratively extracting quantum gates from the output vertices of the ZX-diagram,
that we will refer to as the \emph{frontier} vertices.
Each frontier vertex corresponds to one qubit wire in the final quantum circuit, where
we place the quantum gates that we extract from said vertex.
When we extract a gate from the frontier vertices, the frontier gets simplified
and we are allowed to progressively advance it until reaching the input vertices, 
in which case we have constructed the full circuit and thus finished the extraction.
At any given point of the circuit extraction, one or more of the following gates can
be extracted into the circuit by modifying the ZX-diagram.
\begin{enumerate}
\item A phase $\alpha$ on a frontier vertex can be removed by turning it into an $\rz(\alpha)$ gate
  on the corresponding qubit.
\item One-to-one frontier vertices\tikzfig{extractable}
  can be removed using the $(id)$ rule and extracting a Hadamard gate due to the Hadamard edge
  on their left.
  The frontier is advanced by adding the adjacent vertex to the set of frontier vertices.
\item A Hadamard edge between frontier vertices can be removed by extracting a CZ gate.
\item The connectivity between frontier vertices and their neighbours can be simplified by
  extracting CNOT gates.
  If we form the adjacency matrix $M$ between the frontier vertices and their
  neighbours (on the left), extracting a CNOT gate with control qubit $i$ and
  target qubit $j$ equates to performing the row operation $r_i \leftarrow r_i\oplus r_j$,
  for $r_i,r_j$ rows in $M$ and $\oplus$ addition modulo $2$, thus changing the connectivity on the
  frontier vertices~\cite{duncan-graph-theoretic-2020}.
  Performing Gaussian elimination on $M$ would yield a sequence of CNOTs such
  that their associated row operations leave some rows with a single $1$ in them.
  Each frontier vertex corresponding to those reduced rows becomes a
  one-to-one \tikzfig{extractable} spider, which in
  turn allows us to advance the frontier using Case $(2)$.
\end{enumerate}

An example of each operation can be found in \Cref{fig:extractionExample}.
To fully extract a quantum circuit from a ZX-diagram, we iteratively extract gates from it following
the cases above until consuming the entire diagram.
A more detailed description of the standard circuit extraction algorithm can be found
in~\cite[Section 5]{backens-there-2021}.
It is also important to point out that gflow is preserved even after applying
the rewrites caused on the diagram by the extraction of each of the gates explained above
\cite{duncan-graph-theoretic-2020}.
In our case, we plan to use those same rewrites but with additional considerations, thus
preserving gflow too.
In particular, in~\Cref{sec:algo} we introduce a new way of generating the CNOT gates of
Case (4) above during the Gaussian elimination in such a way that they can be compiled together
into few GMS gates.

\begin{figure}
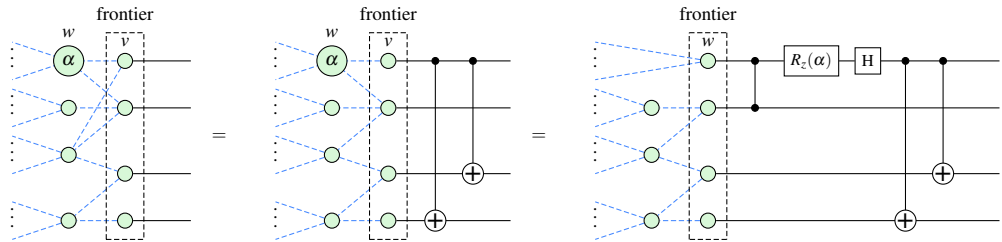

  \centering
  \scalebox{1}{%
    \tikzfig{extractionExample}%
    }
    \caption{An example of circuit extraction.
      In the first equation we simplify the connectivity of $v$ using two CNOT gates.
      In the second equation we extract the Hadamard gate corresponding to the Hadamard
      edge $(v,w)$, an $R_Z$ due to the phase of $w$, and a $CZ$ due to the Hadamard edge between
      $w$ and the second frontier vertex.
    These steps would be repeated until there is no more diagram to be extracted.}
  \label{fig:extractionExample}
\end{figure}

\section{Compilation for Ion Trap Hardware}~\label{sec:comp}

In this section we introduce our methods to compile quantum circuits using global gates.
Our approach consists of modifying the circuit extraction routine explained in
\Cref{sec:extraction} in two ways.
First, we add extra logic to the overall extraction algorithm by checking how
gates can be compiled efficiently into few GMS gates as they are being extracted.
This is done by performing certain local rewrites in the circuit that exploit how the single- and
two-qubit gates behave for better compilation.
Second, we introduce a new way of generating the CNOTs required to simplify the frontier in
Case (4) of the circuit extraction algorithm of~\Cref{sec:extraction}.
This new method consists of generating CNOTs in layers that can be compiled with a single GMS
gate by encoding the task of simplifying the frontier with this restriction
as a linear program.
These two methods fit together into one algorithm by performing circuit extraction
while doing local rewrites
whenever new gates are extracted, and doing the frontier simplification with our new method whenever
we have to perform Case (4).
The layer of CNOT gates generated by our new method is then also compiled using the new rewrites we
introduce in this section.

\subsection{Grouping Entangling Gates into Global Gates}\label{sec:equivalences}
We begin by mentioning the circuit equivalences we use
for compiling entangling gates as global gates.
In \Cref{fig:circuitIdentities} we showcase examples of the circuit equivalences presented in this
section in quantum circuit notation.
\begin{figure}
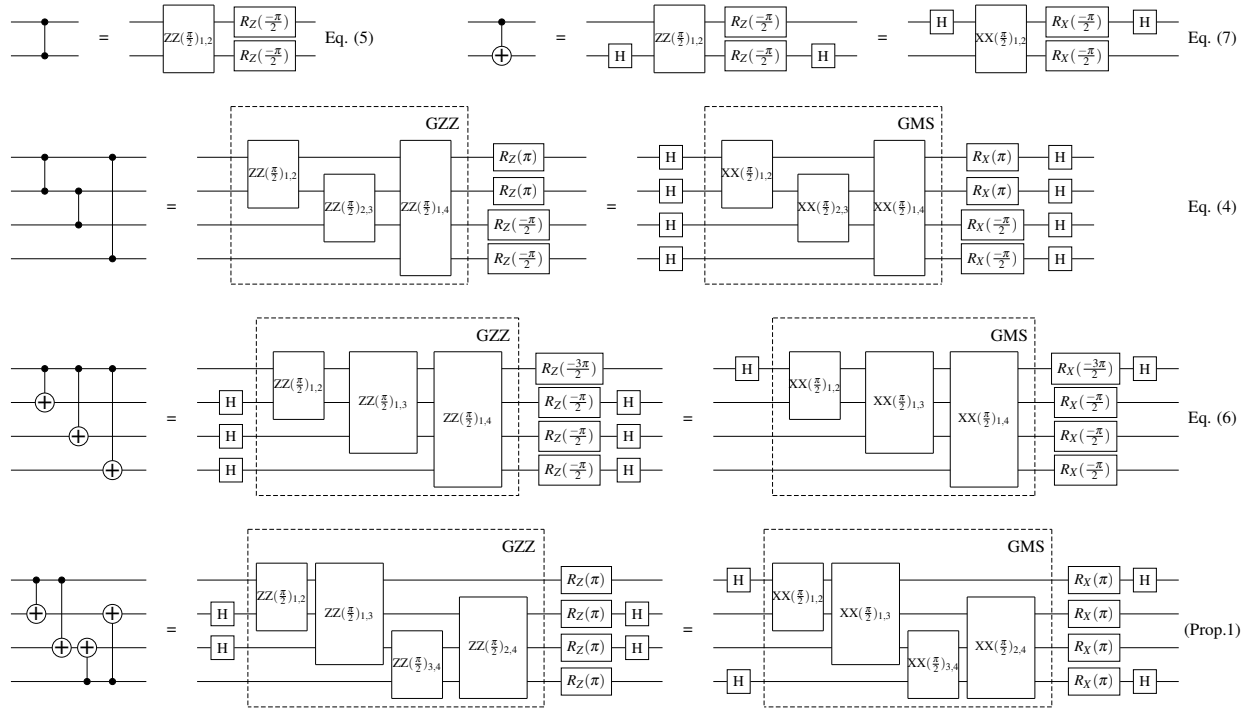

  \centering
  \scalebox{0.9}{%
    \tikzfig{circEquivalences}%
  }
  \caption{Examples of the circuit equivalences given in this section at a glance.
    In the first row we show how to implement a CZ and a CNOT using two-qubit rotations.
    In the subsequent rows we show an example of implementing a layer of CZ gates,
  a fanout gate, and a layer of commuting CNOTs using a single GZZ or GMS gate.}
  \label{fig:circuitIdentities}
\end{figure}

\paragraph{CZ layers}\mbox{}\\
CZ layers refer to series of CZ gates on $n$ qubits, commonly found
on circuits for graph state preparation
\cite{wetering-zx-calculus-2020}
and on Clifford circuits in normal form~\cite{duncan-graph-theoretic-2020}.
For our purposes, we are interested in knowing how to compile CZ layers efficiently since
we generate these layers when doing circuit extraction (see~\Cref{sec:extraction}).
Using a symmetric binary matrix $A$ to represent where the CZ gates are placed in the layer,
we can see how we can implement an arbitrary CZ layer with a single GZZ (or GMS)
plus single-qubit gates. Let $ c_i = \sum_{j=1}^n A_{i,j}$ we have

\begin{gather}\label{eq:cztogzz}
  \begin{aligned}
  \prod_{i>j\in A} \cz_{i,j} &= \prod_{i>j\in A} \rz(-\pi/2)_i\ \rz(-\pi/2)_j\ \zz(\pi/2)_{i,j}\\
                             &=\prod_{i=1}^{n} \rz(-c_i\pi/2)_i \prod_{k>j\in A} \zz(\pi/2)_{k,j}\\
                             &=\prod_{i=1}^{n} \rz(-c_i\pi/2)_i\ \gzz_A(\pi/2) \\
                             &= \h^{\otimes n}\ \prod_{i=1}^{n} \rx(-c_i\pi/2)_i\ \gms_A(\pi/2) \  \h^{\otimes n} ,
\end{aligned}
\end{gather}
where we have used the equality
\begin{equation}
  \label{eq:cztozz}
  \cz_{i,j} =  \rz(-\pi/2)_i\ \rz(-\pi/2)_j\ \zz(\pi/2)_{i,j}.
\end{equation}
Notice how for qubits not involved in the CZ layer the rotation gates $\rz$ and $\rx$ would have
rotation angle zero.
CZ layers have the convenient property that they can always be compiled with a single global gate.
This is not the case however for CNOT layers, where this is only possible when the CNOTs are
arranged in a particular way, as we will see now.

\paragraph{Fanout gates}\mbox{}\\
Another sequence of entangling gates that can be implemented with one global gate is the
\emph{fanout} gate.
A fanout gate consists of a sequence of CNOTs that all share the same control qubit.
Fanout gates are a common occurrence when compiling diagonal
unitaries~\cite{basler-synthesis-2023}
and syndrome measurements useful for
error correction and mitigation~\cite{Nielsen-Chuang-2010}.
We use the notation $\fo^{(i)}_S$ for a fanout gate with control qubit $i$ and target qubits
$j\in S$ for $S$ a subset of the available qubits.
We can compile this gate with a single global gate (and some single-qubit gates)
as follows, for $A_{S}^{(i)}$ the matrix with all zeroes except for the elements located
in row $i$ and with column index in $S$:
\begin{gather}\label{eq:fotogzz}
  \begin{aligned}
    \fo^{(i)}_S &= \prod_{j\in S}\ \cnot_{i,j}\\
                &= \prod_{j\in S}\ \h_j\ \cz_{i,j}\ \h_j \\
                &= \rz(-|S|\pi/2)_i\  \prod_{j\in S}\ \h_j\ \rz(- \pi/2)_j \ \zz(\pi/2)_{i,j}\  \h_j\\
                &= \rz(-|S|\pi/2)_i\  \biggl( \prod_{j\in S}\ \h_j\ \rz(- \pi/2)_j\ \biggr) \biggl( \prod_{l\in S}\zz(\pi/2)_{i,l} \biggr)\  \prod_{k\in S} \ \h_k\\
                &= \rz(-|S|\pi/2)_i\  \biggl( \prod_{j\in S}\ \h_j\ \rz(- \pi/2)_j\ \biggr) \gzz_{A_{S}^{(i)}}(\pi/2)\ \prod_{k\in S} \ \h_k\\
                &= \h_i\ \rx(-|S|\pi/2)_i\  \prod_{j\in S}\ \rx(- \pi/2)_j\  \gms_{A_{S}^{(i)}}(\pi/2)\  \h_i.
\end{aligned}
\end{gather}

\paragraph{Commuting CNOT layers}\mbox{}\\
Lastly, we show how to compile more general layers of CNOT gates with a single global gate.
Layers of CNOT gates, also referred to as CNOT circuits, are a class of quantum circuits that
generate what are called \emph{linear reversible circuits}~\cite{patel-optimal-2008, minimalcnot}.
The limiting factor when grouping CNOTs into a single global interaction comes from needing to
place Hadamard gates into the circuit when turning each CNOT into a $\zz$ or $\xx$ rotation
following the circuit equivalences
\begin{gather}
  \label{eq:cnottorotations}
  \begin{aligned}
    \cnot_{i,j} &=\h_j\ \rz(-\pi/2)_j\ \rz(-\pi/2)_i\ \zz(\pi/2)_{i,j}\ \h_j\\
                &=  \h_i\ \rx(-\pi/2)_j\ \rx(-\pi/2)_i\ \xx(\pi/2)_{i,j}\ \h_i.
  \end{aligned}
\end{gather}
Consequently, these Hadamard gates get in the way when trying to group the two-qubit rotations
on a single $\gms$ or $\gzz$.
In particular, given that we have to conjugate either the control or target qubits
(depending on the axis of rotation we choose) with Hadamards, what determines if a sequence
of CNOTs can be compiled with a single global gate is whether the Hadamards can be removed or
pushed all the way to the left and right.

Recall from~\Cref{sec:extraction} that we construct the output quantum circuits from
a ZX-diagram by iteratively extracting gates.
This means that rather than starting with a series of CNOT layers to be compiled as global gates,
we have some freedom to choose how these layers of CNOTs look like in the first place.
By ensuring that these layers have one property that
guarantees they can be compiled with a single global gate, we are able to streamline the
linear program (\lp) of
\Cref{sec:lp} and the peephole optimization algorithm of~\Cref{sec:peephole}.
This property is \emph{commutativity}.
In particular, if we are able to simplify the adjacency matrix $M$ of the frontier using only
commuting CNOTs until one or more vertices are extractable, we will be able to merge those CNOTs
into a single GMS (plus some single-qubit gates).
We first show in~\Cref{prop:commutingCnotsToGlobal}
that such a layer of commuting CNOTs can be compiled with a single global gate.

\begin{proposition}\label{prop:commutingCnotsToGlobal}
  Let $\cnot_{i_n,j_n}\dots\cnot_{i_1,j_1}$ be a linear reversible circuit consisting of
  commuting CNOT gates.
  This circuit can always be compiled with one global gate plus single-qubit gates.
\end{proposition}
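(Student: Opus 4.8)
The plan is to reduce an arbitrary commuting CNOT layer to a plain CZ layer, since \Cref{eq:cztogzz} already shows that any CZ layer can be absorbed into a single global gate. The first and most important step is to pin down what commutativity forces structurally. I would observe that two gates $\cnot_{i,j}$ and $\cnot_{k,l}$ fail to commute precisely when one gate's control coincides with the other's target, i.e.\ when $i=l$ or $j=k$. Requiring that \emph{every} pair in the layer commute therefore forces the set of control qubits $C$ and the set of target qubits $T$ to be disjoint: no single qubit may act as a control in one CNOT and a target in another. This disjointness $C\cap T=\emptyset$ is exactly the property that makes the Hadamards manageable, and it is the crux of the argument.

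Next I would rewrite each gate using $\cnot_{i,j} = \h_j\,\cz_{i,j}\,\h_j$ (the decomposition already used in \Cref{eq:fotogzz}), so that every Hadamard sits on a target qubit in $T$. Because all the CNOTs commute, I may freely reorder the product to group together those sharing a common target $j$; the inner Hadamards of adjacent blocks then cancel in pairs, since $\h_j\h_j = I$, leaving a single conjugating pair $\h_j\,(\prod_i \cz_{i,j})\,\h_j$ for each $j\in T$. Now, because $C\cap T=\emptyset$, a Hadamard $\h_j$ on a target commutes with every $\cz_{i',j'}$ belonging to a \emph{different} target block: the gate $\cz_{i',j'}$ touches only $i'\in C$ and $j'\in T$ with $j'\ne j$, none of which equals $j$. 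This lets me slide all the remaining Hadamards outward, so the whole layer equals $\bigl(\prod_{j\in T}\h_j\bigr)\bigl(\prod_{\mathrm{CNOTs}}\cz_{i,j}\bigr)\bigl(\prod_{j\in T}\h_j\bigr)$, namely a CZ layer conjugated by a wall of single-qubit Hadamards.

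Finally I would invoke \Cref{eq:cztogzz} to compile the middle CZ layer into one $\gzz$ (equivalently one $\gms$) together with single-qubit $\rz$/$\rx$ rotations, and absorb the two outer Hadamard walls into that single-qubit layer. The main obstacle lives entirely in the second paragraph: making precise that commutativity is exactly strong enough to let the target-Hadamards be pushed to the boundary. Once the control/target disjointness characterization is in hand this is routine, but without it a shared qubit would leave a Hadamard trapped between two CZ gates and block the grouping. I would also record the harmless edge cases to confirm the construction is always available: a repeated identical CNOT simply cancels since $\cz_{i,j}^2=I$, so the middle product reduces to a genuine CZ layer encoded by a symmetric binary matrix, and any qubit outside $C\cup T$ receives only trivial single-qubit gates.
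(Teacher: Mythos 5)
Your proof is correct, but it takes a genuinely different route from the paper's. The paper works with the $\xx$-based decomposition of \Cref{eq:cnottorotations} (Hadamards conjugating the \emph{control} qubit) and argues sequentially: it gives a case analysis of the pairwise commuting CNOT configurations (\Cref{eq:proofCommuting}) and then walks through the sequence one CNOT at a time, pushing each new leftmost Hadamard to the left until it cancels against a Hadamard from an earlier CNOT sharing its control or reaches the boundary, while commuting the $\rx$ corrections past the $\xx$ gates, so that the $\xx$ factors accumulate directly into one $\gms$. You instead isolate the structural content as an explicit lemma---pairwise commutation forces the control set $C$ and the target set $T$ to be disjoint---and use it for a one-shot global rearrangement: decompose via $\cnot_{i,j}=\h_j\,\cz_{i,j}\,\h_j$ (Hadamards on \emph{targets}), group by target, cancel the inner Hadamards, slide the surviving ones to the boundary (legitimate precisely because $C\cap T=\emptyset$), and reduce to a CZ layer already handled by \Cref{eq:cztogzz}. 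Your route buys modularity and explicitness: the disjointness lemma is exactly the fact the paper only states later, inside the proof of \Cref{prop:commutingG}, where it becomes condition $(2)$ on the matrix $G$ and the corresponding LP constraint; the single-qubit corrections come out in closed form from \Cref{eq:cztogzz}; your mod-$2$ remark cleanly handles repeated gates; and your argument subsumes the fanout derivation \Cref{eq:fotogzz} as the shared-control special case. The paper's sequential argument buys something different: it mirrors the incremental way the peephole pass of \Cref{sec:peephole} actually assembles a GMS gate by gate during extraction, and it works natively in the $\h,\rx,\xx$ gate set of the target hardware rather than passing through a CZ layer and a final Hadamard conjugation. Both are valid proofs of the proposition.
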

\begin{proof}
  Without loss of generality (given that a GMS and GZZ are equivalent up to conjugation by
  Hadamards), we show that this is the case with the GMS, using the decomposition of the CNOT gate
  into an $\xx$ rotation and single-qubit gates from
  \Cref{eq:cnottorotations}.
  A linear reversible circuit consists of commuting CNOT gates if,
  when grouping the qubits that act as controls and as targets for the CNOTs in the
  circuit into two separate sets, these sets are disjoint.
  When this is the case, we show now how we can implement this circuit with one GMS and single-qubit
  gates.
  
  Start by decomposing all the CNOT gates individually using~\Cref{eq:cnottorotations}.
  Then, notice that $(1)$ single-qubit $\rx$ gates can be commuted past $\xx$
  gates, and $(2)$
  the Hadamards only get in the way of creating a GMS
  if the control qubit of one CNOT is also involved in another CNOT.
  If this is the case then the qubit serves as control in both CNOTs (given that these CNOTs commute),
  then we would have two Hadamards next to each other due to the decomposition
  of~\Cref{eq:cnottorotations}, in which case they
  cancel out and we are left only with the outermost Hadamards.
  This allows us to have all Hadamards only to the left and right of the circuit
  (none in between $\xx$ gates), $\rx$ gates can be commuted to the left or right of all $\xx$
  gates, and the $\xx$ interactions are together which we can then
  compile with a single GMS.
\end{proof}
An example of a commuting CNOT layer compiled with a GZZ and a GMS can be found in
\Cref{fig:circuitIdentities}.
We are also interested in representing such a layer of commuting CNOTs in a succinct manner
as a matrix to reason about its effect when extracting commuting CNOT layers
during circuit extraction.

\begin{proposition}\label{prop:commutingG}
  During circuit extraction of an $n$-qubit quantum circuit,
  given a frontier with corresponding adjacency matrix $M$,
  extracting a layer of commuting CNOTs yields a frontier with an updated adjacency
  matrix $M' = GM$, where $G$ is an $n\times n$ binary matrix representing the layer
  of commuting CNOTs.
  For $G$ to capture the effect of simplifying the frontier with a layer of commuting CNOTs,
  it must satisfy
  $(1)$ all entries in the diagonal are $1$ and $(2)$ if there is a $1$ on
  an off-diagonal entry $G_{i,j}$ then all off-diagonal entries on row $j$ must be $0$.
\end{proposition}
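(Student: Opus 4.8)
The plan is to translate the action of a CNOT layer on the frontier into linear algebra over $\mathbb{F}_2$ and then read off properties (1) and (2) from a closed form for $G$. Recall from Case (4) of \Cref{sec:extraction} that extracting a single $\cnot_{i,j}$ (control $i$, target $j$) performs the row operation $r_i \leftarrow r_i \oplus r_j$ on the frontier adjacency matrix $M$. This is exactly left multiplication by the elementary matrix $E_{i,j} := I + e_i e_j^{\top}$ over $\mathbb{F}_2$, where $e_k$ denotes the $k$-th standard basis (column) vector: $E_{i,j}M$ leaves every row untouched except row $i$, which becomes $r_i \oplus r_j$. Extracting the whole layer $\cnot_{i_m,j_m}\cdots\cnot_{i_1,j_1}$ therefore sends $M$ to $M' = GM$ with $G := E_{i_m,j_m}\cdots E_{i_1,j_1}$, which already establishes that the update has the form $M' = GM$ for a binary $G$.

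The heart of the argument is to compute $G$ using commutativity. First I would recall the structural fact, already isolated in the proof of \Cref{prop:commutingCnotsToGlobal}, that two CNOTs fail to commute precisely when the control of one is the target of the other; equivalently, in a layer of pairwise-commuting CNOTs no qubit is ever used both as a control and as a target. I would then expand any two factors: over $\mathbb{F}_2$, $E_{i,j}E_{k,l} = I + e_i e_j^{\top} + e_k e_l^{\top} + (e_j^{\top}e_k)\,e_i e_l^{\top}$, and the cross term $e_j^{\top}e_k = [\,j=k\,]$ is nonzero exactly when the target $j$ of one CNOT equals the control $k$ of the other, i.e.\ the excluded non-commuting case. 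Hence all cross terms vanish (so the product is also order-independent, confirming the layer is well defined), and it collapses to the closed form
\begin{equation*}
  G \;=\; I + \sum_{k=1}^{m} e_{i_k} e_{j_k}^{\top},
  \qquad\text{so}\qquad
  G_{ab} = [\,a=b\,] + \#\{\,k : i_k = a,\ j_k = b\,\} \bmod 2 .
\end{equation*}
Because each CNOT has distinct control and target, no summand hits the diagonal, giving $G_{aa} = 1$ for all $a$, which is property (1).

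For property (2), suppose $G_{i,j} = 1$ with $i \neq j$. By the closed form this means $j$ occurs as a target (of an odd number of CNOTs with control $i$), so by control/target disjointness $j$ is never used as a control; consequently no summand $e_{i_k}e_{j_k}^{\top}$ has $i_k = j$, and every off-diagonal entry $G_{j,c}$ (for $c \neq j$) equals $\#\{k : i_k = j\}\bmod 2 = 0$, which is exactly property (2). The same expansion runs in reverse, so that any $G$ satisfying (1) and (2) factors as a product of commuting $E_{i,j}$'s, yielding the converse characterization needed to make the linear program of \Cref{sec:lp} well posed. I expect the main obstacle to be the justification of the closed form rather than the final bookkeeping: one must argue carefully that the sequentially-applied, seemingly order-dependent row operations are captured by the single matrix $I + \sum_k e_{i_k}e_{j_k}^{\top}$, and that this hinges entirely on the commutativity hypothesis through the vanishing of the cross terms $e_j^{\top}e_k$. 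Once control/target disjointness is in hand, properties (1) and (2) follow immediately.
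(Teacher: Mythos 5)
Your proof is correct and takes essentially the same route as the paper's: both encode each extracted $\cnot_{i,j}$ as the elementary matrix $I + e_i e_j^{\top}$ over $\mathbb{F}_2$ and invoke the control/target disjointness of pairwise-commuting CNOTs (from \Cref{prop:commutingCnotsToGlobal}) to determine the shape of the product $G = B_k\cdots B_1$. Where the paper dispatches the final multiplication with ``it is easy to see,'' you make it fully explicit through the vanishing cross terms $e_{j}^{\top}e_{k} = [\,j=k\,]$, obtaining the closed form $G = I + \sum_{k} e_{i_k}e_{j_k}^{\top}$ and, as a welcome extra, the converse factorization of any matrix satisfying $(1)$ and $(2)$ into commuting CNOTs, which the paper never states but tacitly relies on for the soundness of the linear program in \Cref{sec:lp}.
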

\begin{proof}
  Recall from~\Cref{sec:extraction} that the action of extracting a $\cnot_{i,j}$
  having a frontier with connectivity $M$ produces
  the row operation $r_i \leftarrow r_i\oplus r_j$ in $M$.
  This in turn lets us relate an \emph{elementary matrix} to extracted CNOTs.
  In particular for a $\cnot_{i,j}$ we have the elementary matrix
  $B$ which is a $n\times n$ Identity matrix with an additional $1$ in position $B_{i,j}$.
  From~\Cref{prop:commutingCnotsToGlobal} we can see that in a commuting CNOT layer, a qubit
  cannot serve both as a control and as a target of different CNOTs, otherwise these would not
  commute in general.
  Then, let $G=B_k\dots B_1$ be the matrix resulting from composing the elementary matrices of
  a sequence of commuting CNOTs, we have for every elementary matrix $B$ that if $B_{i,j}=1$
  (for $i\ne j$) then the other elementary matrices must necessarily have all zeroes in the off
  diagonal elements of row $j$, otherwise their corresponding CNOTs would not commute.
  It is easy to see that multiplying these matrices $B_k\dots B_1$
  would result in a $G$ with ones in the diagonal
  and with the property that for off diagonal elements if $G_{i,j}=1$ then all off diagonal elements
  of row $j$ remain $0$.
\end{proof}

We now leverage the above circuit identities and results to create a circuit compilation
algorithm.

\subsection{Creating Quantum Circuits with Global Gates}\label{sec:algo}

We present here our compilation algorithm that takes as input an arbitrary quantum circuit
and outputs an equivalent circuit consisting
of a gate-set supported by ion trap quantum hardware.
In particular, we are interested in (1) using the GMS gate as the entangling operation and (2)
minimizing the number of GMS gates needed to implement the circuit.
On a high level, the algorithm works by performing the following steps:
\begin{enumerate}
\item Transforming the quantum circuit into a (graph-like) ZX-diagram.
\item Simplifying the diagram with gflow-preserving rules as discussed in~\Cref{sec:reduction}
\item Extracting a compiled circuit out of the ZX-diagram by performing a modification of
  the base circuit extraction algorithm from~\Cref{sec:extraction}.
We change the base circuit extraction algorithm in two ways.
First, we change the way in which the connectivity of the frontier is simplified by encoding
the problem as a LP that ensures all CNOTs that are extracted in each round fit in
a single GMS.
Second, we perform peephole optimization rewrites during the circuit extraction to group the
CNOTs and CZs into GMS gates, and use extracted Hadamard and rotation gates to make circuit
rewrites that ensure a lower total gate count.
The peephole rewrites are performed as we extract CNOT, CZ, $\rz$, and Hadamard gates
from the diagram as it is done in~\Cref{sec:extraction}, with the exception of CNOT gates,
which are extracted following the output of our LP.
\end{enumerate}
We refer to the algorithm as \texttt{gms_compiler}. Pseudocode for it can be found in~\Cref{app:pseudocode}.
We now explain in detail how the linear program and the peephole optimization works.

\subsubsection{Linear Program}\label{sec:lp}

When simplifying a frontier during circuit extraction (see Case (4) in~\Cref{sec:extraction}),
we have a set of frontier vertices and their neighbourhood.
Our objective is to turn as many frontier vertices
as possible into one-to-one vertices by extracting a sequence of CNOTs
that can be compiled with a single GMS.
We achieve this by encoding the problem as a linear program.
Here, we present its formulation on a high level, as certain constraints,
such as conditional assignment and variable multiplication,
require additional (less intuitive) constraints and variables in order to be linearized properly.
Given an input $n\times m$ adjacency matrix $M$ representing the connections between the frontier
vertices and their neighborhood, the LP is as follows,
for $c\in \mathbb{N},z_i,x_{ij},G_{ij}\in \{0,1\}$ for all $ i,k \in \{1,\dots,n\}, j \in \{1,\dots,m\}$.

\begin{gather}
  \label{eq:lp}
  \begin{aligned}
    \text{maximize} & \quad \sum_{i=1}^n n z_i \ \ - c&  \\
    \text{subject to} & \quad \sum_{i=1}^n z_i  &&\geq  1 & \quad \text{(c1)}\\
                    & \quad z_i &&= \begin{cases}
                                      1, & \text{if } \sum_{j=1}^m x_{ij} = 1\\
                                      0, & \text{otherwise}
                                    \end{cases} & \quad \text{(c2)}\\
                    & \quad x_{ij} &&= [GM]_{ij} & \quad \text{(c3)}\\
                    & \quad G_{i,i} &&= 1 & \quad \text{(c4)}\\
                    & \quad (1-G_{kj}) + \prod_{i \ne j} (1- G_{ji}) &&\geq 1 & \quad \text{(c5)}\\
                    & \quad c &&= \sum_{i\ne j} G_{i,j} & \quad \text{(c6)}
  \end{aligned}
\end{gather}

Constraints (c1) and (c2) express how we count the number of extractable vertices in the frontier
by defining a binary variable $z_i$ with $z_i = 1$ if row $i$ has a single $1$
(thus a single neighbour) in it, and we enforce that at least there is one extractable vertex.
Constraint (c3) defines the variables $x_{ij}$, which correspond to the entries of the matrix
$X=GM$ that represents the state of the frontier $M$ after applying the matrix $G$ which encodes
the row operations.
Constraints (c4) and (c5) model how we encode the possible row operations that can be done with one
global gate encoded in $G$.
These are exactly the same conditions of \Cref{prop:commutingG}.
Constraint (c6) encodes in $c$ the number of CNOTs that the program is using to reduce $M$.
Finally, the objective function aims to maximize the number of reduced rows, with a penalty on the
CNOT count used for that.
We add a weight of $n$ to each reduced row, since as worst-case one would need $n-1$ CNOTs to reduce
one row.
An equivalent \lp with all constraints linearized can be found in~\Cref{app:lp}.
In there, we have linearized the conditional assignment in (c2) by introducing additional variables
and constraints.
Similarly, doing the (binary) matrix multiplication $X=GM$ of (c3) requires performing XORs, which also need to
be linearized.
We also linearize the variable multiplication of (c5).
As a proof of correctness, we show that one such reduction of a frontier with only commuting CNOTs
always exists.

\begin{proposition}\label{prop:commutingExtract}
  During circuit extraction, given that the ZX-diagram has a gflow,
  we can always simplify one vertex in the frontier to make it a one-to-one vertex
  using a series of commuting CNOTs.
  In particular, a single fanout gate is sufficient to simplify at least one vertex from the frontier.
\end{proposition}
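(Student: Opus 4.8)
The plan is to show that the correction set of a suitable vertex in the gflow is exactly the control/target data of the fanout we need. First I would record how a fanout acts on the frontier adjacency matrix $M$ of~\Cref{sec:extraction}: by~\Cref{prop:commutingG}, the fanout $\fo^{(i)}_S$ with control $i$ and targets $S$ performs the row operations $r_i \leftarrow r_i \oplus r_j$ for all $j \in S$ at once, so it replaces row $r_i$ by $\bigoplus_{k \in \{i\}\cup S} r_k$ and fixes every other row. Since the columns of $M$ are the non-output neighbours of the frontier, this new row is the indicator vector of $\odd{\{i\}\cup S}$ restricted to those neighbours. Hence it suffices to produce a set $K$ of frontier vertices such that $\odd{K}$ meets the non-output vertices in a single vertex: taking any $i \in K$ as control and $S = K\setminus\{i\}$ as targets then turns frontier vertex $i$ into a one-to-one vertex.

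Next I would extract $K$ from the gflow $(g,\prec)$. Let $u$ be a $\prec$-maximal non-output vertex carrying an $XY$ measurement, which is the plane of the phase spiders in the reduced diagrams we extract from. Maximality together with the gflow axioms does all the work: from ``$w \in g(u),\, w \ne u \Rightarrow u \prec w$'' and $u \notin g(u)$ we get $g(u)\subseteq O$, i.e.\ $g(u)$ consists of current frontier vertices; from ``$w \in \odd{g(u)},\, w \ne u \Rightarrow u \prec w$'' together with $u \in \odd{g(u)}$ we get $\odd{g(u)} \cap \comp{O} = \{u\}$. Moreover $g(u)\neq\emptyset$, since $u\in\odd{g(u)}$ forces $u$ to have a neighbour in $g(u)$. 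Setting $K = g(u)$ gives precisely the singleton odd neighbourhood required above, with $u$ itself a column of $M$ because it is adjacent to the frontier.

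It then remains to assemble the statement. Choosing any control $i \in K$ and targets $S = K\setminus\{i\}$, the first step shows that $\fo^{(i)}_S$ rewrites $r_i$ into the indicator of $\{u\}$, so frontier vertex $i$ becomes one-to-one and can be advanced by Case~(2) of~\Cref{sec:extraction}; when $|K|=1$ the vertex is already one-to-one and no CNOT is needed. Finally, a fanout is a commuting CNOT layer whose matrix $G$ has off-diagonal ones only in row $i$, so it satisfies~\Cref{prop:commutingG} and, by~\Cref{prop:commutingCnotsToGlobal}, is compilable with a single global gate, matching the claim.

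I expect the main obstacle to be the measurement-plane bookkeeping. The argument above is immediate for an $XY$-labelled maximal vertex, but for the $YZ$ and $XZ$ planes the gflow axioms put $u$ itself into $g(u)$, so the correction set is no longer contained in the frontier and the direct fanout reading fails; handling these cases needs either the observation that the simplified, circuit-derived diagrams we extract from carry only $XY$ measurements, or the general focused-gflow conditions to relocate the correction onto the outputs. A secondary point to check is that each extraction round preserves the gflow hypothesis, so that a fresh maximal vertex and its correction set remain available at every step.
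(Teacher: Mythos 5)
Your proposal is correct and follows essentially the same route as the paper: pick a vertex $u$ maximal in the gflow order, use its correction set $g(u)$ (contained in the frontier, with odd neighbourhood meeting $\comp{O}$ only in $u$) as the support of a single fanout $\fo^{(i)}_{K\setminus\{i\}}$, and observe that the resulting row sum leaves one frontier row with a single $1$. The only differences are presentational: where the paper cites Lemma~5.4 and Prop.~3.14 of Backens et al.\ for $g(v)\subseteq F$ and $\odd{g(v)}=\{v\}$, you derive these facts directly from maximality and the XY gflow axioms, and you explicitly flag the YZ/XZ measurement-plane caveat (handled in this pipeline by pivoting on the frontier before CNOT extraction) that the paper's proof leaves implicit in those citations.
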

\begin{proof}
  Let $(G,I,O,\lambda)$ be the labelled open graph corresponding to the ZX-diagram
  we are extracting from, and $F$ the set of frontier vertices.
  We start by picking a vertex $v\in V$ that is maximal in the gflow order.
  This gives us a vertex with
  the property $g(v) \subseteq F$~\cite[Lemma 5.4]{backens-there-2021}.
  From here, constructing a \emph{maximally delayed gflow}
  $(g,\prec)$ makes it so the correction set of $v$ has the property
  Odd$(g(v)) = \{v\}$
  \cite[Prop. 3.14]{backens-there-2021}.
  Having Odd$(g(v)) = \{v\}$ means that we have a subset of the frontier vertices ($g(v)$)
  that we can work with in order to create a one-to-one vertex in the frontier.
  To do so, we start by
  constructing the adjacency matrix of $g(v)$ with their neighbours.
  We pick any row (call it $r_w$) associated to a frontier vertex $w\in g(v)$, and add
  all other rows (modulo $2$) to it.
  This makes $r_w$ to have all zeroes except for a $1$ in the
  column corresponding to $v$.
  This in turn means that the frontier vertex corresponding to $r_w$ would be a one-to-one
  spider with $v$ as its unique neighbour and thus we can advance the frontier.
  Given that $r_w$ represents a frontier vertex, it has an associated qubit with some index $i$.
  We have to notice that adding all the rows in the adjacency matrix to $r_w$
  is achieved by extracting the fanout gate $\fo^{(i)}_{S}$ for $S$ the set of indices of the qubits
  associated with the vertices in $g(v)\setminus \{w\}$.
  All the CNOTS in a fanout commute and thus it can be compiled with a single global gate.
\end{proof}
\subsubsection{Peephole Optimization}\label{sec:peephole}
We build on top of the standard circuit extraction algorithm presented
in~\Cref{sec:extraction} by adding additional logic with the objective of merging extracted
two-qubit gates into GMS gates and keeping the amount of GMS gates as low as possible.
We also take advantage of extracted single-qubit gates to better compile the GMS gates and keep
the total gate count low.
During peephole optimization we do not assume that the layers of CNOTs that are extracted
are necessarily a commuting layer, so the methods explained below also work with a different
strategy for the CNOT extraction part than the LP method we introduced.
For example, PyZX~\cite{pyzx} uses the CNOT circuit synthesis algorithm of Patel et al.
\cite{patel-optimal-2008} (based on matrix decomposition)
to extract CNOT gates in layers that are not necessarily commuting.
The general idea is to progressively build a GMS gate with the incoming gates until we reach a gate
that cannot fit in it, in which case we stop constructing the current GMS and begin building a new
one starting with said gate, to which incoming gates will be added if possible and so on.
We repeat this process until the extraction is finished.
We now explain our methods, distinguishing between each gate that can be extracted:
\begin{enumerate}
\item Hadamard gates that are extractable due to naturally appearing one-to-one frontier vertices
  (which did not require prior simplification via CNOTs) and due to performing pivoting on the
  frontier are treated the same as extracted $\rz$ gates.
  If the qubit they appear on is not involved in the GMS that is currently being built,
  we place these
  gates in the circuit to the right of that GMS, ensuring they do not get in the way of upcoming
  gates.
  If the qubit participates in the GMS, we place them in front of the GMS,
  given that they cannot commute past a GMS.
  Here is an example of the former:
  \begin{equation*}
    \scalebox{1}{\tikzfig{peephole1}}
  \end{equation*}
  Hadamard gates can also become extractable after simplifying a frontier with CNOT gates.
  We treat these Hadamards together with those CNOTs in Case (3).
  
\item Layers of CZ gates are handled by separating their gates in two sets.
  Since all CZs commute with each other, we can easily split them into the group of gates whose
  qubits are not involved in the current GMS and the group in which they are.
  The former, similar to the case of single-qubit gates above,
  is placed in the circuit to the right of the current
  GMS compiled as one GMS using~\Cref{eq:cztogzz}.
  The latter set is also compiled as a single GMS and placed in front of the current GMS
  being built.
  CZ gates have the problem that compiling them as GMS gates require all involved qubits to be
  conjugated by Hadamard gates, which in turn get in the way if we try to add additional entangling
  gates into the global gate they belong to.
  In addition to this, CNOTs and CZs are usually extracted in an interleaved way,
  so it is unlikey that we will have two layers of Hadamards canceling out due to two consecutive
  CZ layers.
  For this reason we do not merge CZs into the current GMS and we rather push these gates past it
  whenever possible, since we have a better chance of adding more CNOTs to the GMS instead.
  An example of how CZs are treated is shown below.
  \begin{equation*}
    \scalebox{1}{\tikzfig{peephole2}}
  \end{equation*}
\item Extracted layers of CNOT gates are managed as follows.
  Recalling that we make no assumptions about their structure, we do the following to compile them as
  GMS gates.
  First, we decide how the CNOTs will look like as a series of $\h,\rx$, and $\xx$ gates.
  In~\Cref{sec:extraction} we discussed how extracting CNOTs is done with the purpose of
  simplifying frontier vertices, which in turn creates an extractable Hadamard gate after removing
  the simplified vertex and advancing the frontier.
  We use this Hadamard gate to cancel out the other Hadamard that appears when compiling
  CNOTs as $\xx$ rotations. For example:
  \begin{equation*}
    \scalebox{1}{\tikzfig{hadCommute}}
  \end{equation*}
  Removing the Hadamard on the left in turn allows us to potentially add more CNOTs acting on that
  qubit to the GMS.
  
  Knowing how to use Hadamards to compile CNOTs better, we show now how we compile CNOT layers
  into a sequence of $\h,\rx$, and $\xx$ gates by
  deciding on the following for each CNOT in the sequence:
  \begin{enumerate}
  \item A $\cnot_{i,j}$ will have a $\h$ gate to the left in qubit $i$ if
    the previous gate on qubit $i$ was a CNOT from the same layer with $i$ as a target.
    Alternatively, we will also place an $\h$ in the same position if this CNOT is the leftmost CNOT
    in this layer for qubit $i$ and the frontier vertex of qubit $i$ is not extractable.
  \item A $\cnot_{i,j}$ will have a $\h$ gate to the right in qubit $i$ unless the next gate on qubit $i$
    on the right is a CNOT with $i$ as the control, or if there are no more CNOTs
    on the right but there is an adjacent Hadamard on $i$, in which case we cancel them out.
  \end{enumerate}
  Knowing the placement of every Hadamard in the layer, we convert each CNOT into the sequence
  $\rx(-\pi/2)_i\rx(-\pi/2)_j \xx(\pi/2)_{i,j}$ and appropriately push the $\rx$ gates to the right until
  reaching the end of the layer or a Hadamard gate, merging consecutive $\rx$ gates if any.
  Once we have rewritten the CNOT layer, we easily decide which gates can go into the current GMS
  by finding the $\xx$ interactions that can reach the current GMS without clashing with a Hadamard
  (or other extracted gates that are in between them, such as $\rz$ or CZ).
  The gates that we are not able to merge into the GMS are the ones starting the new GMS.
  We give now an example of compiling a CNOT layer with the method above:
  \begin{equation*}
    \scalebox{0.8}{\tikzfig{peephole3}}
  \end{equation*}
\end{enumerate}

We apply the rules above for all incoming gates during circuit extraction until we have fully
extracted our quantum circuit.

\begin{proposition}\label{prop:algorithm}
  The circuit compilation algorithm \texttt{gms_compiler} of~\Cref{sec:algo} compiles arbitrary
  quantum circuits into ones that use Global M{\o}lmer-S{\o}rensen gates as the entangling
  operation.
  The algorithm preserves \emph{gflow} and terminates.
\end{proposition}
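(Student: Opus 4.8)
The plan is to prove the three claims — correctness (semantic equivalence together with the restriction to the GMS gate-set), gflow preservation, and termination — separately, leaning throughout on the observation that \texttt{gms\_compiler} differs from the standard extraction of \Cref{sec:extraction} only in \emph{how} gates are grouped in the output circuit and in \emph{which} frontier-simplifying CNOTs are chosen, never in the underlying diagram rewrites.

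For \emph{correctness}, I would first recall that steps 1--2 (circuit to graph-like ZX-diagram, then gflow-preserving simplification) are semantics-preserving by the results of \Cref{sec:reduction}. For step 3, the key point is that every rewrite performed on the diagram during our extraction is one of the four cases of \Cref{sec:extraction}, so the extracted circuit implements exactly the linear map of the diagram. What then remains is to check that the output uses only GMS gates as its entangling operation: the CZ layers, fanout gates, and commuting CNOT layers produced during extraction are replaced by their GMS implementations using \Cref{eq:cztogzz}, \Cref{eq:fotogzz}, and \Cref{prop:commutingCnotsToGlobal} respectively, each of which is an exact circuit identity. The peephole rewrites of \Cref{sec:peephole} only regroup single- and two-qubit gates into GMS gates via these same identities and the decomposition \Cref{eq:cnottorotations}, so they introduce no error.

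For \emph{gflow preservation}, I would argue that the only operations modifying the ZX-diagram are the four extraction cases of \Cref{sec:extraction}, all of which are already known to preserve gflow~\cite{duncan-graph-theoretic-2020}. The two modifications we introduce act elsewhere: the peephole optimization manipulates only the already-extracted output circuit and never touches the diagram, while the LP of \Cref{sec:lp} merely selects which CNOTs to extract. Since each such CNOT corresponds to the standard row operation $r_i \leftarrow r_i \oplus r_j$ on the frontier adjacency matrix, it is exactly a Case (4) rewrite and therefore preserves gflow. For \emph{termination}, the crucial input is \Cref{prop:commutingExtract}: as long as the diagram has gflow, a single fanout always suffices to make at least one frontier vertex one-to-one, allowing us to advance the frontier via Case (2). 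By the gflow-preservation argument just given, gflow is maintained at every iteration, so this progress guarantee always applies; each advance removes a vertex, and since the diagram is finite, the extraction loop halts after finitely many steps.

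The main obstacle I anticipate is making the gflow and termination arguments airtight in the presence of the LP, namely verifying that the LP is always \emph{feasible} and that its optimum genuinely reduces at least one row (so that progress is nontrivial). Feasibility follows because the fanout of \Cref{prop:commutingExtract} is always a valid point satisfying every LP constraint — it is a commuting layer, so its matrix $G$ meets the conditions of \Cref{prop:commutingG}, and it reduces a row — while the first LP constraint $\sum_{i} z_i \geq 1$ forces any optimal solution to make at least one frontier vertex extractable. Once feasibility and this nontrivial progress are established, the reduction of the LP-selected CNOTs to standard Case (4) row operations carries the rest of the gflow and termination arguments through.
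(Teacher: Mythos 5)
Your proposal is correct and follows essentially the same route as the paper's own proof: reuse of the standard gflow-preserving extraction rewrites (with the novelties confined to CNOT selection and the already-extracted circuit), \Cref{prop:commutingExtract} for LP feasibility and progress, and the exact circuit identities of \Cref{sec:equivalences,sec:peephole} to guarantee all entangling gates end up as GMS gates. Your treatment is in fact somewhat more explicit than the paper's on termination and LP feasibility, but the underlying argument is the same.
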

\begin{proof}
  The techniques discussed in~\Cref{sec:reduction}
  for turning quantum circuits into simplified graph-like
  ZX-diagrams work for arbitrary quantum circuits and preserve gflow, as shown in
  \cite{backens-there-2021}.
  Similarly, the rewrite rules for extracting $\rz,\h,\cz$, and $\cnot$ gates from
  \Cref{sec:extraction} provably preserve gflow on the ZX-diagram
  \cite{duncan-graph-theoretic-2020}.
  We leverage these same rules during our circuit extraction, with the difference
  being in what is done with the gates after being extracted (on the quantum circuit side), or in
  how we choose which $\cnot$s to extract.
  This allows us to preserve gflow throughout the process and to always be able to find at least
  one extractable vertex in the linear program using~\Cref{prop:commutingExtract}.
  Given that only $\cz$ and $\cnot$ gates are extracted from the diagram, and we compile them into
  GMS gates with the rewrites of~\Cref{sec:peephole}, every entangling operation in the output
  circuit is a GMS gate.
\end{proof}

After finishing the circuit extraction, in practice we end up with a high number of
single-qubit gates due to decomposing CZs and CNOTs into
single- and two-qubit rotations.
To improve the end result, we have also written a simple single-qubit gate reduction
pass that we execute at the very end of our algorithm.
It performs simplifications such as $\h\rx\h=\rz$ and $\h\rz\h=\rx$ and commutes $\rx$ gates through
$\xx$ gates in order to merge as many $\rx$ and $\rz$ gates as we can,
while ensuring that the arrangement of the GMS gates stays unchanged.
In practice, we have observed that this optimization pass significantly reduces the single-qubit
gate count of our end results.

\section{Evaluation Results}
In this section we discuss the implementation of our methods and experimental results when compiling
a variety of quantum circuits, with other algorithms used for comparison.
\subsection{Experimental setup}\mbox{}\\
We run experiments on
QASMBench~\cite{li2022qasmbenchlowlevelqasmbenchmark},
MQT Bench~\cite{quetschlich2023mqtbench}, and
quantum chemistry~\cite{cowtan-phase-2020}
benchmarking circuits.
We compare the performance of our peephole optimization algorithm
using both the linear program and the Patel et al. algorithms.
Furthermore, both implementations are compared to the Qiskit~\cite{qiskit2024}
circuit optimizer and a naive compilation approach.
We use Qiskit as an end-to-end compilation tool for comparison given that,
to the best of our knowledge, there is no publicly available
end-to-end implemented compilation
algorithm from arbitrary circuits to circuits using global rotations, 
We set in Qiskit the optimization level to $3$ (the highest available)
and perform a transpiler pass that uses the $\xx$ gate as the entangling gate,
as Qiskit does not offer transpilation for GMS gates.
The naive algorithm consists of taking the original circuit and greedily creating GMS gates only for
parallel CNOT gates (ones that do not share any qubits), without doing any type of gate commutations.

We have implemented the \lp of~\Cref{sec:lp}
to extract layers of CNOTs and the peephole optimization
logic of~\Cref{sec:peephole} and integrated them into
PyZX~\cite{pyzx},
a Python library used for working with the ZX-calculus.%
\footnote{Code implementation and an extended list of benchmarking results available on Github~\cite{Villoria-Optimization-and-Synthesis}}
We implemented the linear program in the Python-MIP
library~\cite{pythonMIP} using Gurobi~\cite{gurobi}
as backend.
PyZX currently comes with an implementation of the diagram simplification
of~\Cref{sec:reduction} and the base
circuit extraction algorithm explained
in~\Cref{sec:extraction},
where the logic for simplifying the connectivity
of a frontier with CNOT gates utilizes the synthesis algorithm for linear reversible circuits
from Patel et al.~\cite{patel-optimal-2008}.
Experiments were run on a consumer laptop with a timeout set for $30$ minutes for each
circuit, where the ZX approaches were given $15$ minutes for diagram simplification
and $15$ minutes for circuit extraction.
Lastly, the output circuits of our algorithms have been verified by doing an equivalence
check with PyZX against the original circuit (except for some circuits where the
equivalence checker exceeded the five minute timeout).

\begin{table}
  \centering
  \resizebox{\columnwidth}{!}{%
    \begin{tabular}{|l|ccc|ccc|ccc|ccc|}
      \hline
      ~ & ~ & \textbf{Naive Algorithm} & ~ & ~ & \textbf{Peephole +~\cite{patel-optimal-2008} } & ~ & ~ & \textbf{Peephole + LP} & ~ & ~ & \textbf{Qiskit} & ~\rule{0pt}{2.6ex} \\
      Circuit & SQG & Entangling & T & SQG & Entangling & T & SQG & Entangling & T & SQG & Entangling & T \\ \hline\hline
      adder\_n64 & 3431 & 399 & 451.7 & 1963 & 198 & 195.4 & 1977 & 152 & \textbf{ 154.5 } & 1654 & 455 & 411.6 \\
      ae\_n25 & 4865 & 600 & 772 & 1790 & 197 & 199.7 & 2413 & 155 & \textbf{ 160.8 } & 1774 & 558 & 477.7 \\
      basis\_change\_n3 & 135 & 10 & 16 & 109 & 24 & 23.8 & 153 & 29 & 29.6 & 63 & 10 & \textbf{ 10.1 } \\
      basis\_test\_n4 & 343 & 45 & 52.6 & 68 & 7 & \textbf{ 7.8 } & 103 & 11 & 11.8 & 120 & 20 & 19.9 \\
      basis\_trotter\_n4 & 5263 & 581 & 737.7 & 505 & 103 & \textbf{ 104.6 } & 886 & 130 & 133.2 & 1370 & 233 & 233 \\
      bell\_n4 & 88 & 7 & 10.1 & 28 & 5 & \textbf{ 5.2 } & 28 & 5 & \textbf{ 5.2 } & 30 & 5 & \textbf{ 5.2 } \\
      bigadder\_n18 & 980 & 129 & 145.6 & 576 & 77 & 75.6 & 576 & 69 & \textbf{ 70 } & 477 & 130 & 117.6 \\
      bv\_n280 & 2589 & 152 & 153.2 & 165 & 3 & \textbf{ 3.1 } & 165 & 3 & \textbf{ 3.1 } & 307 & 152 & 119.2 \\
      cat\_n260 & 1557 & 259 & 260.1 & 1552 & 259 & 259.5 & 1552 & 259 & 259.5 & 1036 & 259 & \textbf{ 231.1 } \\
      dj\_n25 & 314 & 24 & 33.8 & 38 & 3 & \textbf{ 3.1 } & 38 & 3 & \textbf{ 3.1 } & 84 & 24 & 20.3 \\
      dnn\_n51 & 3338 & 392 & 484.7 & 2619 & 258 & 255.5 & 2363 & 136 & \textbf{ 140.6 } & 1264 & 271 & 255.3 \\
      error\_correctiond3\_n5 & 482 & 49 & 65 & 30 & 4 & \textbf{ 4 } & 30 & 4 & \textbf{ 4 } & 189 & 35 & 34.9 \\
      fredkin\_n3 & 62 & 8 & 9.4 & 45 & 8 & 8.1 & 45 & 8 & 8.1 & 34 & 8 & \textbf{ 7.2 } \\
      ghz\_state\_n255 & 1527 & 254 & 255.1 & 1522 & 254 & 254.5 & 1522 & 254 & 254.5 & 1016 & 254 & \textbf{ 226.7 } \\
      grover\_n2 & 25 & 2 & 3.4 & 14 & 1 & \textbf{ 1.4 } & 14 & 1 & \textbf{ 1.4 } & 14 & 2 & 2.1 \\
      grover-v-chain\_n11 & 23990 & 3280 & 3716.2 & 18690 & 2396 & 2389.7 & 13781 & 1337 & \textbf{ 1347.5 } & 11072 & 3280 & 2931.5 \\
      H2\_UCCSD\_BK\_sto3g & 283 & 38 & 44.7 & 93 & 15 & 15.4 & 92 & 13 & \textbf{ 13.7 } & 108 & 26 & 24.6 \\
      H2\_UCCSD\_JW\_631g & 5318 & 768 & 824.5 & 832 & 113 & 113.6 & 767 & 81 & 83.8 & 277 & 62 & \textbf{ 59.7 } \\
      H2\_UCCSD\_JW\_sto3g & 430 & 56 & 63.6 & 93 & 14 & 14.5 & 93 & 14 & 14.5 & 46 & 9 & \textbf{ 8.8 } \\
      H2\_UCCSD\_P\_sto3g & 282 & 38 & 44.8 & 93 & 15 & 15.5 & 86 & 14 & \textbf{ 14.6 } & 113 & 25 & 23.8 \\
      hhl\_n7 & 1970 & 196 & 246.8 & 1099 & 234 & 248.2 & 1442 & 274 & 289 & 382 & 92 & \textbf{ 85.6 } \\
      hs4\_n4 & 48 & 4 & 6.3 & 22 & 1 & \textbf{ 1.3 } & 22 & 1 & \textbf{ 1.3 } & 28 & 4 & 4 \\
      ising\_n98 & 1651 & 194 & 232.1 & 969 & 194 & \textbf{ 183.8 } & 969 & 194 & \textbf{ 183.8 } & 1041 & 194 & 189.3 \\
      iswap\_n2 & 24 & 2 & 3.3 & 12 & 1 & \textbf{ 1.4 } & 12 & 1 & \textbf{ 1.4 } & 14 & 2 & 2.1 \\
      knn\_n67 & 2250 & 264 & 313.8 & 1384 & 48 & 46.8 & 1559 & 23 & \textbf{ 23.6 } & 924 & 231 & 209.9 \\
      LiH\_UCCSD\_BK\_sto3g & 60162 & 8680 & 9068.8 & 5998 & 848 & 842.2 & 5231 & 484 & 502 & 1505 & 414 & \textbf{ 380.9 } \\
      LiH\_UCCSD\_JW\_sto3g & 54020 & 8064 & 8500.4 & 6290 & 818 & 811.2 & 5449 & 502 & 519.2 & 1533 & 388 & \textbf{ 364 } \\
      LiH\_UCCSD\_P\_sto3g & 53330 & 7640 & 8029.7 & 6621 & 912 & 904.5 & 5522 & 481 & \textbf{ 498.7 } & 12217 & 3681 & 3328 \\
      linearsolver\_n3 & 46 & 4 & 5.9 & 18 & 2 & \textbf{ 2.2 } & 18 & 2 & \textbf{ 2.2 } & 19 & 4 & 4.1 \\
      lpn\_n5 & 27 & 2 & 2.9 & 15 & 3 & 3 & 15 & 3 & 3 & 5 & 2 & \textbf{ 1.8 } \\
      multiplier\_n15 & 1670 & 222 & 248.4 & 741 & 91 & 90.5 & 613 & 50 & \textbf{ 51.3 } & 811 & 222 & 201 \\
      multiply\_n13 & 316 & 39 & 43.6 & 202 & 16 & \textbf{ 15.7 } & 202 & 16 & \textbf{ 15.7 } & 173 & 40 & 37 \\
      pea\_n5 & 318 & 42 & 50.9 & 72 & 13 & 12.9 & 72 & 12 & \textbf{ 12 } & 84 & 17 & 16.6 \\
      portfolioqaoa\_n17 & 5661 & 788 & 912.3 & 2502 & 171 & 169.8 & 2317 & 134 & \textbf{ 133.6 } & 2637 & 816 & 728 \\
      portfoliovqe\_n13 & 3068 & 234 & 349.5 & 453 & 48 & 47.2 & 566 & 35 & \textbf{ 36.5 } & 555 & 234 & 190.6 \\
      qaoa\_n16 & 624 & 62 & 83.8 & 242 & 31 & \textbf{ 30.5 } & 242 & 31 & \textbf{ 30.5 } & 378 & 64 & 64 \\
      qec9xz\_n17 & 255 & 30 & 39.3 & 64 & 5 & \textbf{ 5.2 } & 64 & 5 & \textbf{ 5.2 } & 75 & 32 & 26.9 \\
      qec\_en\_n5 & 96 & 10 & 12.2 & 30 & 4 & \textbf{ 4.1 } & 30 & 4 & \textbf{ 4.1 } & 53 & 10 & 9.9 \\
      qf21\_n15 & 934 & 115 & 142.4 & 415 & 46 & 45.7 & 400 & 37 & \textbf{ 37.8 } & 480 & 115 & 106.8 \\
      qft\_n29 & 6177 & 812 & 1037.3 & 2219 & 235 & 238.1 & 3286 & 202 & \textbf{ 210.4 } & 2055 & 652 & 586.5 \\
      qftentangled\_n25 & 4938 & 638 & 714.2 & 1832 & 237 & 239 & 2522 & 175 & \textbf{ 182.3 } & 2029 & 604 & 547.9 \\
      qpe\_n9 & 368 & 43 & 53.1 & 150 & 21 & \textbf{ 20.4 } & 150 & 21 & 20.5 & 201 & 43 & 40.7 \\
      qpeexact\_n25 & 4801 & 628 & 791.4 & 2098 & 213 & 213.8 & 2661 & 157 & \textbf{ 163 } & 1950 & 585 & 529.6 \\
      qpeinexact\_n25 & 4861 & 636 & 800.3 & 2215 & 225 & 225.1 & 2644 & 168 & \textbf{ 174.6 } & 2005 & 594 & 539.3 \\
      qram\_n20 & 1026 & 129 & 143.4 & 746 & 88 & 86.6 & 653 & 47 & \textbf{ 48.2 } & 498 & 130 & 117.3 \\
      qrng\_n4 & 12 & 0 & 0.3 & 8 & 0 & \textbf{ 0.2 } & 8 & 0 & \textbf{ 0.2 } & 12 & 0 & 0.3 \\
      qugan\_n39 & 2516 & 295 & 358.4 & 1858 & 169 & 166.5 & 1761 & 108 & \textbf{ 110.1 } & 947 & 205 & 193.1 \\
      qwalk-noancilla\_n8 & 30879 & 4302 & 4849.7 & 14310 & 2195 & 2161.4 & 16233 & 1909 & \textbf{ 1948.1 } & 14509 & 4290 & 3848.7 \\
      qwalk-v-chain\_n19 & 14847 & 1626 & 1898.1 & 12503 & 2043 & 2027.2 & 12216 & 1896 & 1920 & 5160 & 1614 & \textbf{ 1450.9 } \\
      random\_n23 & 9934 & 1069 & 1336.3 & 7275 & 807 & 796.4 & 6982 & 429 & \textbf{ 437.8 } & 4533 & 1032 & 976.4 \\
      realamprandom\_n21 & 4200 & 594 & 633.6 & 847 & 90 & 87 & 502 & 41 & \textbf{ 42.7 } & 1445 & 630 & 507.9 \\
      sat\_n11 & 1991 & 243 & 275.3 & 834 & 135 & \textbf{ 134.9 } & 896 & 135 & 137.5 & 1007 & 252 & 233.8 \\
      seca\_n11 & 702 & 79 & 93.5 & 293 & 36 & 35.9 & 252 & 28 & \textbf{ 28.6 } & 278 & 80 & 71.6 \\
      simon\_n6 & 126 & 14 & 16.8 & 5 & 1 & \textbf{ 0.8 } & 5 & 1 & \textbf{ 0.8 } & 54 & 14 & 12.8 \\
      su2random\_n25 & 6000 & 856 & 914.4 & 467 & 31 & \textbf{ 31.1 } & 467 & 31 & \textbf{ 31.1 } & 2094 & 900 & 730.1 \\
      swap\_test\_n83 & 2876 & 328 & 395.1 & 1441 & 38 & 37.5 & 2065 & 23 & \textbf{ 23.5 } & 1189 & 287 & 260.7 \\
      teleportation\_n3 & 22 & 2 & 2.9 & 12 & 2 & 2.2 & 12 & 2 & 2.2 & 8 & 2 & \textbf{ 1.9 } \\
      toffoli\_n3 & 51 & 6 & 7 & 41 & 6 & 6.2 & 37 & 5 & \textbf{ 5.3 } & 31 & 6 & 5.8 \\
      twolocalrandom\_n21 & 4200 & 594 & 633.6 & 847 & 90 & 87 & 502 & 41 & \textbf{ 42.7 } & 1445 & 630 & 507.9 \\
      variational\_n4 & 138 & 16 & 20.4 & 86 & 14 & 14.5 & 75 & 12 & 12.2 & 51 & 8 & \textbf{ 8.1 } \\
      vqe\_n16 & 420 & 30 & 31.9 & 471 & 48 & 47.4 & 307 & 21 & \textbf{ 21.8 } & 172 & 30 & 27.3 \\
      vqe\_uccsd\_n8 & 36096 & 5284 & 5772.7 & 2542 & 389 & 389.8 & 2521 & 329 & \textbf{ 334.6 } & 16623 & 4807 & 4394.3 \\
      wstate\_n380 & 6823 & 757 & 883.6 & 3797 & 759 & 719.4 & 3797 & 759 & 719.4 & 3415 & 758 & \textbf{ 676.5 } \\
      \hline
    \end{tabular}%
  }
  \caption{Numerical results of the compilation of a variety of quantum circuits from the QASMBench,
    MQTBench, and quantum chemistry benchmarking suites.
    For each compilation method, we showcase the final single-qubit gate count (SQG column),
    entangling gate count (Entangling column) and circuit execution time in milliseconds
    (T column).
    For \textit{Qiskit}, the entangling gates are $\xx$ gates,
    while for our methods and the naive algorithm they refer to GMS gates.
    Results in bold point to the method with the fastest circuit execution time.
    \label{table:results}}
\end{table}

\subsection{Results and discussion}\mbox{}\\
Table (\ref{table:results}) shows benchmarking results for a variety of circuits
from the three benchmark suites.
We display the single-qubit gate counts (SQG), two-qubit or global gate counts (Entangling), and the
execution time of the circuit (T) in milliseconds.
The best performance is marked in bold.
Data regarding the gate times for each operation are taken from the latest benchmarks of the
IonQ Forte~\cite{Chen2024benchmarkingtrapped}
quantum computer, which is of $110\mu s$ for single-qubit gates and $672\mu s$ for entangling
gates.
Note that individual $\xx$ gates and $\gms$ gates take the same time to execute.
We have included all circuits from the benchmarking datasets, excluding only the ones
with a lower qubit count from each circuit family and removing repeated circuits across
benchmark suites.
Circuit names have appended their qubit count in the form of \textit{name\_nX} where X is
the qubit count of the circuit.
The section labeled \textit{Naive Algorithm} corresponds to the naive compilation that would be needed
to execute the original circuit in the hardware with only adding parallel CNOTs into GMS gates.
The \textit{Peephole +}~\cite{patel-optimal-2008} and \textit{Peephole + LP}
sections correspond to the performance of our algorithms with
the peephole optimization using the Patel algorithm and the LP for the frontier simplification,
respectively.

We can see that for most circuits the peephole + LP and peephole + Patel
implementations yield the faster circuits,
with some exceptions that we will mention shortly.
The peephole optimization alone rarely outperforms the
implementation with the \lp, although in some cases such as the circuits \texttt{basis_test} and
\texttt{basis_trotter} it is able to reach a lower amount of GMS gates.
Although rare, this can happen given that the LP only takes into account the information on the
current frontier vertices and not the whole ZX-diagram, so different choices of $\cnot$s to extract
may yield a simpler diagram later on in the extraction.
The drawback however is that the implementation with the LP program takes the longest to compile
the quantum circuits, followed by the peephole optimization and then Qiskit.
Notice also that in most cases the peephole optimization alone is sufficient to outperform Qiskit.
If we inspect the circuits in which Qiskit performs better, we can see that this is due to the
original structure of the circuit not being very amenable for grouping commuting CNOTs or CZs
into global gates.
For example, the \texttt{cat, ghz}, and \texttt{wstate} circuits are built almost
exclusively with one ``CNOT ladder'', which is a sequence of CNOTs where the control of a CNOT
is the target of the next (or vice-versa).
Some of these circuits could have equivalent implementations with a
commuting CNOT layer instead of a ladder, but when given the less convenient implementation
our algorithm cannot find the alternative.
This shows that even when using optimization algorithms, we still require sensible circuit
design in the first place, though having an algorithm for compiling arbitrary circuits
that can be integrated in automated tools has significant value.
Similarly, \texttt{fredkin} and \texttt{lpn} are also rigid circuits that are constructed with
CNOTs that are not amenable to be put together into a GMS.
Lastly, looking at the entangling gate counts, we can see how both of our algorithms are able
to reduce their amount significantly in comparison to both the original implementation and
Qiskit in most of the cases.
This is particularly interesting given that the ZX-calculus optimization routines usually
excel at reducing non-Clifford phase gates and struggle with reducing two-qubit gate counts,
sometimes even increasing them~\cite{staudacher-reducing-2023}.

\section{Conclusions and Future Work}
In this work we have presented a circuit synthesis and optimization algorithm for
ion trap quantum devices.
It is based on the ZX-calculus and consists of a circuit extraction routine that has as
objective to group entangling gates into global gates, a special entangling operation
available in ion trap devices that performs two-qubit gates in parallel.
The algorithm consists of a series of peephole optimization style rewrite rules that for the most
part puts layers of two-qubit gates next to each other and removes Hadamard gates
so more entangling gates can be compiled together.
We also develop a Linear Program that decides the shape of extracted CNOT layers in a
way that these can be compiled with only one global gate.
We ran bechmarks on several quantum circuits against a naive implementation and Qiskit
and report that in most of the cases both of our implementations outperform the others.

As future work, we remark that variations on the standard circuit extraction algorithm
is a largely unexplored topic.
This and a related work from Staudacher et al.~\cite{staudacher-multi-controlled-2024}
show how changing the way in which a
circuit is extracted from a ZX-diagram can result in new compilation algorithms
for specific hardware platforms.
Modifying circuit extraction is not an easy task however, as one has to ensure that the
gflow of the diagram is preserved.
In our case, we circumvent this issue by using existing rules and changing how they are
used rather than coming up with new ones.
Another interesting direction would be to investigate if allowing different coupling
strengths for each qubit pair (as in the EASE gate) would simplify our circuits even further,
given that for now we only assume the ability of turning arbitrary interactions off
but all active interactions must have the same rotation angle.
An algorithm that e.g. extracts two-qubit \emph{phase gadgets}~\cite{cowtan-phase-2020}
instead of CNOTs could group these together into EASE gates.

\section*{Acknowledgements}
This work was funded by the European Union under Grant Agreement 101080142, EQUALITY project.
\bibliographystyle{eptcs}
\bibliography{paper}
\appendix
\section{Graph-theoretic Definitions}\label[appendix]{app:graph}

Here we present some technical definitions related to the graph-theoretic view of ZX-diagrams.

\begin{definition}[\cite{duncan-graph-theoretic-2020}, Definition 3.1.]\label{def:graph-like}
  A ZX-diagram is in graph-like form if it fulfills the following properties:
  \begin{itemize}
  \item All spiders are Z-spiders.
  \item Z-spiders are connected only by Hadamard edges.
  \item There are no parallel edges between spiders and no edges from a spider to itself.
  \item All inputs and outputs of the diagram are connected to a Z-spider via a non-Hadamard edge.
  \item A Z-spider is connected to at most one input or output.
  \end{itemize}
\end{definition}

\begin{definition}[Adapted from \cite{backens-there-2021}]]\label{def:labelled}
  A labelled open graph is a tuple $\Gamma = (G,I,O,\lambda)$ such that:
  \begin{enumerate}
  \item $G = (V,E)$ is an undirected (simple) graph.
  \item $I,O\subseteq V$ are subsets of the vertices representing input and output vertices,
    respectively.
    We denote the non-input and non-output vertices by
    $\comp{I}:= V\setminus I$ and $\comp O:= V\setminus O$.
    The intersection between $I$ and $O$ is not necessarily empty.
  \item $\lambda: \comp{O} \to \{XY,YZ,XZ\}$ is a labeling of the non-output vertices
    into measurement planes.
    The assignment of measurement plane for a non-output vertex $v$ indicates whether a spider
    \tikzfig{z-effect}, \tikzfig{x-effect} or \tikzfig{y-effect} is attached to $v$
    when $\lambda(v)$ is equal to $XY,YZ,$ or $XZ$ (respectively),
    for some angle $\alpha$.
  \end{enumerate}
  For a vertex $v \in V$ we denote the set of neighbours of $v$ by $N(v)$.
  For any $K\subseteq V$, we define the \emph{odd neighbourhood}
  of $K$ as $\odd{K}:=\{w\in V\ |\ |N(w) \cap K|\equiv 1 \pmod{2}\}$, which is the set of vertices
  that have an odd number of neighbours in $K$.
\end{definition}
We can use labelled open graphs $(G,I,O,\lambda)$ to better reason about
a ZX-diagram that has been turned into graph-like form.
The underlying graph $G$ models the Z-spiders and the Hadamard edges between them,
the input/output vertices represent where the quantum circuit begins and ends,
$\lambda$ encodes the measurement planes, and an additional labeling
$\alpha: \comp{O} \to [0,2\pi)$ the measurement angles.
Our ZX-diagrams stay in graph-like form during their simplification and during
extraction, so we are able to reason about them and their flow properties in this way
throughout the whole process.

\begin{definition}[\cite{backens-there-2021}, Definition 2.36]\label{def:gflow}
  A generalized flow (or \textit{gflow}) on a labelled open graph $(G,I,O,\lambda)$
  is a tuple $(g,\prec)$ consisting of a map
  $g: \comp{O} \to 2^{\comp{I}}$ (where $ 2^{\comp{I}}$ is the powerset of $\comp{I}$) and a partial
  order $\prec$ in $V$ such that for all
  $v\in \comp{O}$:
  \begin{itemize}
  \item If $w\in g(v)$ and $w\ne v$, then $v\prec w$.
  \item If $w\in \odd{g(v)}$ and $v\ne w$ then $v\prec w$.
  \item If $\lambda(v) =$ XY then $v \notin g(v)$ and $v\in \odd{g(v)}$.
  \item If $\lambda(v) =$ XZ then $v \in g(v)$ and $v\in \odd{g(v)}$.
  \item If $\lambda(v) =$ YZ then $v \in g(v)$ and $v\notin \odd{g(v)}$.
  \end{itemize}
  We call $g(v)$ the \textit{correction set} of $v$, and it captures how an unwanted measurement
  outcome in $v$ could be accounted for by adaptively correcting later measurements.
\end{definition}
The existence of gflow in a labelled open graph can be understood as the ability to
correct the possible measurement errors that might occur in the computation when
interpreting graph-like diagrams as measurement patterns.

\clearpage
\section{Linear Program}\label[appendix]{app:lp}

We present here the \lp of~\Cref{sec:lp} with all its constraints linearized.
The input is an $n\times m$ adjacency matrix $M$, and the set of variables consists
of $x_{i,j}, y_i, z_i, k_{i,\ell}, t_{i,j}, G_{i,\ell},c$.
The following constraints hold for all $i,\ell,p\in\{1,\dots,n\}, j\in\{1,\dots,m\}$.
\begin{align*}
  \text{maximize} & \quad \sum_{i=1}^n n z_i \ \ - c&\\
  \text{s.t.} & \quad \sum_{i=1}^n z_i  &&\geq  1 && \quad (1)\\
  \\
                  & \quad \sum_{j=1}^m x_{ij} \hspace{0.25em} -1  &&\geq  y_i  && \quad (2) \\
                  & \quad \sum_{j=1}^m x_{ij} \hspace{0.25em} -1  &&\leq  m\ y_i  && \quad (3) \\
                  & \quad y_i +z_i &&= 1 && \quad (4) \\
  \\
                  & \quad t_{i,j} && \geq 0  && \quad (5)\\
                  & \quad t_{i,j} && \leq \lfloor n/2\rfloor  && \quad (6)\\
                  & \quad x_{i,j} && \geq 0 && \quad (7)\\
                  & \quad x_{i,j} && \leq 1 && \quad (8)\\
                  & \quad x_{i,j} &&= \sum_{\ell=1}^n G_{i,\ell}M_{\ell,j} -2t_{i,j}  && \quad (9) \\
  \\
                  & \quad G_{i,i} &&= 1 && \quad (10)\\
                  & \quad k_{i,\ell} &&\le 1 - G_{\ell,p} && \quad (11)\\
                  & \quad k_{i,\ell} &&\ge \sum_{p\ne\ell}\left[1-G_{\ell,p}\right] - (m-2)  && \quad (12)\\
                  & \quad (1-G_{i,\ell}) + k_{i,\ell} && \ge 1  && \quad (13)\\
  \\
                  & \quad c &&= \sum_{i\ne \ell} G_{i,\ell}   && \quad (14)
\end{align*}

We have that the binary variable $z_i=1$ if frontier vertex $i$ is extractable (its row
has a single $1$).
Constraint $(1)$ ensures that we will have at least one simplified frontier vertex.
Contraints $(2-4)$ uses auxiliary binary variables $y_i$ to encode $y_i = 0$ iff
$\sum_{j=1}^m x_{i,j} = 1$ and $y_i$ OR $z_i$.
These conditions are equivalent to $z_i=1$ if row $i$ of $X=GM$ has a single $1$.
Constraints $(5-9)$ use auxiliary integer variables $t_{i,j}$ to linearize the
matrix multiplication $X=GM$, necessary since we are working with binary matrices and
the XOR operation must be linearized.
The constraints $(10-13)$ are used to encode constraints four and five of the \lp in~\Cref{sec:lp}
using the auxiliary binary variables $k_{i,\ell}$ to represent the product part of constraint five.
The last constraint $(13)$ uses integer variable $c$ to count off diagonal entries of $G$.
This gives a linear program with $2nm + 2n^2 + 2n$ variables and $5nm + n^3 + 3n + 1$ constraints.

\section{Pseudocode}\label[appendix]{app:pseudocode}
\begin{algorithm}[H]
\caption{GMS compiler}
\begin{algorithmic}

\Function{extract_sqgs}{frontier, $O$}
\State $\triangleright$ Get extractable single-qubit gates from frontier (\Cref{sec:extraction})
\State sqgs $\gets$ \textproc{get_sqgs}(frontier)
\ForAll {gate $\in$ sqgs}
        \State $O \gets$ place gate after latest GMS if it commutes, place before otherwise
\EndFor
\EndFunction

\Function{extract_czs}{frontier, $O$}
\State $\triangleright$ Get extractable CZs from frontier (\Cref{sec:extraction})
\State czs $\gets$ \textproc{get_czs}(frontier)
\State czs_left, czs_right $\gets$ \textproc{classify_czs}(czs)
        \Comment{(\Cref{sec:peephole}, Case (2)) }
\State $O \gets$ place czs_right after latest GMS
\State $O \gets$ place czs_left before latest GMS
\EndFunction

\Function{extract_cnots}{cnots, $O$}

\ForAll{$\cnot$ $\in$ cnots}
    \State  $\triangleright$ Compile $\cnot$ as sequence of $\h,\xx,\rx$ gates
    \State h_left, $\xx$, rxs, h_right $\gets$ \textproc{classify_cnot}($\cnot$)
    \Comment{(\Cref{sec:peephole}, Case (3)) }
    \State $O \gets$ $\rx\in$ rxs and h_right are commuted to the right
    \If {$\xx\in$ compiled_cnot fits in latest GMS}
        \State  $O \gets$ add $\xx$ to latest GMS
    \Else
        \State $O \gets$ create new GMS with $\xx$ on it
    \EndIf
    \State $O \gets$ place h_left to the left of the $\xx$ gate
\EndFor
\EndFunction
\vspace{1em}

\Function{gms_compiler}{$I$}\Comment{Input: quantum circuit $I$}
\State zx_diagram $\gets$ \textproc{circuit_to_ZX}($I$)
        \Comment{Circuit to graph-like diagram (\Cref{sec:reduction}})
\State zx_diagram $\gets$ \textproc{reduce_diagram}(zx_diagram)
        \Comment{Simplify diagram (\Cref{sec:reduction})}
\State $O$ $\gets \varnothing$ \Comment{Initialize empty output circuit} 
\While{frontier $\in$ zx_diagram}
\State \textproc{extract_sqgs}(frontier, $O$)
        \Comment{Compile single-qubit gates (\Cref{sec:peephole}, Case (1))}
\State \textproc{extract_CZs}(frontier, $O$)
        \Comment{Compile CZs (\Cref{sec:peephole}, Case (2))}
\State cnots $\gets$ \textproc{LP}(frontier)
\Comment{LP to simplify frontier with CNOTs (\Cref{sec:lp})}
\State \textproc{extract_cnots}(cnots, $O$)
        \Comment{Compile CNOT layer (\Cref{sec:peephole}, Case (3))}
\EndWhile

\State $O \gets$ \textproc{simplify_sqgs}($O$)
\Comment{Merge rotation gates and reduce Hadamards(\Cref{sec:peephole})}
\State \textbf{return} $O$
\EndFunction
\vspace{1em}
\end{algorithmic}
\end{algorithm}

\end{document}